\newclass{\Wone}{W[1]}
\newclass{\Wstarone}{W^{\ast}[1]}
\newclass{\Wtwo}{W[2]}
\newclass{\Wstartwo}{W^{\ast}[2]}
\newclass{\Wthree}{W[3]}
\newclass{\Wstarfive}{W^{\ast}[5]}
\newclass{\Wfive}{W[5]}
\newclass{\Weight}{W[8]}
\newclass{\Wt}{W[t]}
\newclass{\Wstart}{W^{\ast}[t]}
\newclass{\Wtwotminustwo}{W[2t-2]}
\newclass{\WP}{W[P]}
\newclass{\paraNP}{para\text{-}NP}
\newlang{\MCSigmaTwoOne}{MC($\Sigma_{2,1}$)}
\newlang{\MCSigmatu}{MC($\Sigma_{t,u}$)}
\newlang{\MCSigmaStarFiveOne}{MC($\Sigma^{\ast}_{5,1}$)}
\newlang{\MCSigmaStartu}{MC($\Sigma^{\ast}_{t,u}$)}
\newlang{\PI}{Pattern Identification}
\newlang{\PIS}{PI with Small Strings}
\newlang{\PIP}{PI with Large Patterns}
\newlang{\PIPS}{PI with Large Patterns and Small Strings}
\newlang{\PIp}{PI with Small Patterns}
\newlang{\PIps}{PI with Small Patterns and Small Strings}
\newlang{\DS}{Dominating Set}
\newlang{\PVC}{Planar Vertex Cover}
\newlang{\SetCover}{Set Cover}
\newlang{\VC}{Vertex Cover}
\newlang{\WSATdCNFPos}{WSAT(\ensuremath{d}-CNF\ensuremath{^{+}})}
\newlang{\kFS}{\ensuremath{k}-Feature Set}
\newtheorem{theorem}{Theorem}[section]
\newtheorem{lemma}[theorem]{Lemma}
\newtheorem{corollary}[theorem]{Corollary}
\newtheorem{conjecture}[theorem]{Conjecture}
\theoremstyle{remark}
\newtheorem{claim}[theorem]{Claim}
\theoremstyle{definition}
\newtheorem{definition}[theorem]{Definition}
\newcommand{\problem}[3]{\begin{quote}\noindent\textsc{#1}:\\
                                    \begin{tabular}{rp{0.6\textwidth}}
                                      \textit{Instance: } &#2\\
                                      \textit{Question: } &#3
                                    \end{tabular}\end{quote}}
\newcommand{\Card}[1]{\left\vert #1 \right\vert}
\newcommand{\Yes}{\textsc{Yes}}
\newcommand{\No}{\textsc{No}}
\newcommand{\ie}{\emph{i.e.}}
\begin{document}

\begin{frontmatter}

\title{Separating Sets of Strings by Finding Matching Patterns is Almost Always Hard}
\author{Giuseppe Lancia} 
\address{Dipartimento di Matematica e Informatica, University of Udine, Via delle Scienze 206, 33100 Udine, Italy}
\author{Luke Mathieson}
\address{School of Electrical Engineering and Computer Science, University of Newcastle, Callaghan, NSW 2308, Australia}
\author{Pablo Moscato}
\address{School of Electrical Engineering and Computer Science, University of Newcastle, Callaghan, NSW 2308, Australia}

\begin{abstract}
We study the complexity of the problem of searching for a set of patterns that separate two given sets of strings. This problem has applications in a wide variety of areas, most notably in data mining, computational biology, and in understanding the complexity of genetic algorithms. We show that the basic problem of finding a small set of patterns that match one set of strings but do not match any string in a second set is difficult (\NP{}-complete, \Wtwo{}-hard when parameterized by the size of the pattern set, and \APX{}-hard). We then perform a detailed parameterized analysis of the problem, separating tractable and intractable variants. In particular we show that parameterizing by the size of pattern set and the number of strings, and the size of the alphabet and the number of strings give \FPT{} results, amongst others.
\end{abstract}

\begin{keyword}
pattern identification \sep parameterized complexity \sep computational complexity  
\end{keyword}

\end{frontmatter}

\section{Introduction}
\label{sec:intro}

Finding patterns in a collection of data is one of the fundamental problems in data mining, data science, artificial intelligence, bioinformatics and many other areas of both theoretical and applied computer science. Accordingly there are a large number of formulations of this problem. In this paper we develop a particular formulation, drawn from two central motivations:

\begin{enumerate}
\item multiparent recombination in genetic and evolutionary algorithms, and
\item the construction of explanatory patterns in single-nucleotide polymorphisms related to disease.
\end{enumerate}

It should not be construed however that these motivations are limitations on the applicability of the problem we develop and study. As will be seen, the underlying computational problem is a general one that occurs as a fundamental component of many other computational problems.

\subsection{The Central Problem}
\label{sec:central_problem}

Before expanding upon the motivations, we briefly introduce the core computational problem to provide a semi-formal context and some unifying vocabulary. For full definitions we refer the reader to Section~\ref{sec:prelims}. Central to the problem is the notion of \emph{pattern}, a string over an alphabet $\Sigma$ which has been augmented with a special symbol $\ast$. A pattern matches a string over $\Sigma$ if the pattern and the string are the same length and each character of the pattern is the same as the character of the string at that position, or the pattern has an $\ast$ at that position, \ie{} $\ast$ `matches' any symbol from the alphabet.  The fundamental problem is then, given two sets, $G$ and $B$, of strings over $\Sigma$, can we find a set of patterns of size at most $k$ such that every string in $G$ matches one of our patterns, and none of the strings in $B$ match any of our patterns.

\subsection{Separating Healthy Patterns from Diseased}
\label{sec:motivation_SNP}

A significant portion of bioinformatics and computational medicine efforts are focused on developing diagnostic tools. The identification of explanatory genes, uncovering of biomarkers, metabolic network analysis and protein interaction analysis all have as a key (but not sole) motivation the identification of differential markers of disease and consequently routes to treatment. Consider the following problem as a motivating archetypal example: we have two sets of individuals, healthy  and diseased and for each example we are given a string that encodes the single-nucleotide polymorphism (SNPs) states across the two copies of each genome, giving us two sets of strings $G$ and $B$\footnote{Whether healthy is $G$ and diseased is $B$ or vice versa depends on what information we wish the set of patterns to extract.}. A SNP has several alleles of which an individual has two. The individual may thus be homozygous in any of the alleles, or heterozygous with any choice of pairs of alleles, giving the underlying alphabet $\Sigma$.

It is easy to see that if we can identify patterns of SNPs that separate the healthy from the diseased individuals, we have a source of genetic information that may assist in explaining and treating the disease.

This problem is even more apparent in its computational form when considering a biologically motivated form of computation, \emph{i.e.}, evolutionary algorithms.

\subsection{Patterns in Multiparent Recombination}
\label{sec:motivation_GA}

The central mechanism for optimization in Genetic Algorithms (GAs) is the recombination of parent solutions to produce a new child solution which ideally retains the positive aspects of the parents. The mechanism derives from an analogy with sexual reproduction in biological evolution and hence typically combines two existing solutions to produce the offspring. In the optimization setting however, there's no conceptual reason for this restriction. Given that recombination can be viewed as a local move in the search space from one individual solution to another as mediated by a third individual solution, a natural generalization of this is to employ multiple parents in the hope of further refining the components of the solution that promote quality, while producing new solutions that effectively cover the search space.

The central theoretical formalization for describing this process is that of \emph{schemata}\footnote{We use the third declension neuter form of \emph{schema}, as it matches better the Greek roots of the word.}. An individual solution in a (simple) GA is described by an array, which we can represent as a string, of length $n$ over a given alphabet $\Sigma$. A \emph{schema} is a string of length $n$ over the same alphabet augmented with the special ``wild card'' character $\ast$, \ie{}, a pattern. A schema can then be thought of as representing a portion of the search space. The preservation of desirable shared characteristics of two or more parent individuals can then be viewed as the problem of defining a suitable schema. We can define a set $G$ using the individuals selected as parents for a recombination operation, and, if desired, a set $B$ from any individuals whose characteristics we may wish to avoid. The child individual(s) can then be generated from this schema with the wild cards replaced in whichever manner is chosen. Thus we can use schemata to model the basic operation of genetic recombination operators.

This idea not only models multiparent recombination but also multi-child recombination. When considering simply a set of parents from which we wish to generate a set of children, constructing schemata that are compatible with the parents is straightforward. A single schema that is a string of $n$ many $\ast$ symbols would suffice as a trivial solution and the natural solution where for each position, if all the parents agree on the same symbol, the schema has that symbol and $\ast$ otherwise also provides a simple solution. However in these cases it is reasonably easy to see that the schemata generated can easily be under-specified, leading to a loss of useful information, rendering the recombination operation ineffective. One solution to this problem is to ask for a small set of schemata that are compatible with the parents, but are incompatible with a set of forbidden strings -- akin to the list of forbidden elements in Tabu search. In this paper, we elaborate upon and examine this idea.

Some further complexity issues surrounding multiparent recombination have been examined in~\cite{CottaMoscato2005}.

\subsection{Our Contribution}
\label{sec:ccc}

In this paper we formalize the problem of finding a small set of patterns that match a set of strings, without matching a set of forbidden strings, as discussed in the introduction and examine its complexity. We call the general form of the problem \PI{} and introduce some useful variants. In most cases this problems turn out to be hard. We naturally then consider the problem from a Parameterized Complexity perspective. The problem has a rich parameter ecology and also provides an interesting example of a non-graph theoretic problem. Unfortunately for many parameterizations the problem turns out to be hard in this setting as well. The natural parameterization by the number of desired schemata is \Wtwo{}-hard. Even if we take the length of the strings as the parameter, the problem is \paraNP{}-complete. Table~\ref{tab:Results} gives a summary of the parameterized results, and some key open problems. It is also inapproximable and for some cases we obtain parameterized inapproximability results as well. The only case for which we are able to obtain fixed-parameter tractability relies on a small number of input strings which have a limited number of symbols which are different from a given ``base'' symbol.

\begin{table}[t]
  \centering
  \begin{tabular}{@{}llr@{}} \toprule
   Parameter & Complexity & Theorem\\ \midrule
   $k + \Card{\Sigma} + \Card{B}$       & \Wtwo{}-hard & \ref{thm:PI_W[2]-c}\\
   $k + \Card{\Sigma} + s + \Card{B}$ & \Wtwo{}-complete & \ref{cor:PIp_W[2]-complete}\\
   $n + d + \Card{B}$ & \paraNP{}-complete & \ref{cor:PIS_paraNP-c_short_strings}\\
   $\Card{\Sigma} + d + r + \Card{B}$ & \paraNP{}-complete & \ref{cor:PIP_PIPS_paraNP-c}\\
   $\Card{\Sigma} + d + s + \Card{B}$ & \paraNP{}-complete & \ref{cor:PIP_PIPS_paraNP-c}\\
   $d + \Card{G} + \Card{B}$ & \FPT{} & \ref{PI_FPT_few_small_strings}\\
   $\Card{\Sigma} + n$ & \FPT{} & \ref{thm:PI_FPT_k_Sigma_n}\\
   $k + n$ & \FPT{} & \ref{thm:PI_FPT_k_n}\\
   $\Card{G} + n$ & \FPT{} & \ref{thm:PI_FPT_G_n}\\
   $k + \Card{\Sigma} + d + r + \Card{B}$ & \FPT{} & \ref{thm:PIP_FPT_k_r_Sigma_B}\\
   $k + \Card{G} + \Card{B}$ & Open &\\
   $\Card{\Sigma} + \Card{G} + \Card{B}$ & Open & \\
   $k + \Card{\Sigma} + d$ & Open &\\
   \bottomrule
  \end{tabular}
  \caption{Summary of the parameterized results of the paper. $\Card{\Sigma}$ is the size of the alphabet, $n$ is the length of the strings and patterns, $\Card{G}$ and $\Card{B}$ are the sizes of the two input string sets, $k$ is the number of patterns, $r$ is the maximum number of $\ast$ symbols in a pattern, $s$ is the maximum number of non-$\ast$ symbols in a pattern and $d$ is the number of `non-base' elements in each string. Of course the usual inferences apply: tractable cases remain tractable when expanding the parameter and intractable cases remain intractable when restricting the parameter. We note that a number of cases remain open, of which we include some of the more pertinent here, however given the number of parameters under consideration, we refer the reader to Sections~\ref{sec:discussion_parameters} and~\ref{sec:conclusion} for a proper discussion of the open cases.} 
\label{tab:Results}
\end{table}

\subsection{Related Work}
\label{sec:related}

The identification of patterns describing a set of strings forms a well studied family of problems with a wide series of applications. Although, as best as we can determine, the precise problems we studied here have not yet been considered, a number of interesting related problems are explored in the literature. We present here a selection of some of the more relevant and interesting results, however these can at best form a basis for further exploration by the interested reader.

One of most immediately similar variants is that where pattern variables are allowed. In contrast to the work here, these variables can act as substrings of arbitrary length. Keans and Pitt~\cite{KearnsPitt1991} give a family of polynomial time algorithms for learning the language generated by a single such pattern with a given number $k$ of pattern variables. Angluin~\cite{Angluin1980} studies the inverse problem of generating a pattern, with a polynomial time algorithm for the case where the pattern contains a single pattern variable being the central result. We note that a central difference here is the repeated use of variables, allowing the same undefined substring to be repeated. The properties of these pattern languages have since been studied in some detail, far beyond the scope of this paper.

Bredereck, Nichterlein and Niedermeier~\cite{BredereckNichterleinNiedermeier2013} employ a similar, but not identical, formalism to that employed here, but study the problem of taking a set of strings and a set of patterns and determining whether the set of strings can be altered to match the set of patterns. In their formalism patterns are strings over the set $\{\Box,\star\}$. We note in particular though that their definition of matching differs from our definition of compatibility in that a string matches a pattern if and only if the string has the special symbol $\star$ exactly where the pattern does. They show this problem to be \NP{}-hard, but in \FPT{} when parameterized by the combined parameter of the number of patterns and the number of strings. They also present an ILP based implementation and computational results. Bredereck \emph{et al.}~\cite{Brederecketal2015} examine forming teams, \emph{i.e.}, mapping the set of strings to the set of patterns in a consistent manner. They use a similar basis, excepting that the special $\star$ symbol in a pattern now matches any symbol in a string and that the $\Box$ symbol requires homogeneity of the matched strings (\emph{i.e.} the symbol it matches is not specified, but all matching strings must have the same symbol at that point). They give a series of classification results, with the problem mostly being intractable, but in \FPT{} for the number of input strings, the number of \emph{different} input strings and the combined parameter of alphabet size with the length of the strings.

Gramm, Guo and Niedermeier~\cite{Gramm2006} study another similar problem, \textsc{Distinguishing Substring Selection}, where the input is two sets of strings (``good'' and ``bad''), and two integers $d_{g}$ and $d_{b}$ with the goal of finding a single string of length $L$ whose Hamming distance from all length $L$ substrings of every ``good'' string is at least $d_{g}$ and from at least one length $L$ substring for each ``bad'' string is at most $d_{b}$. An extension of the \textsc{Closest String}~\cite{GrammNR03,Li2002} and \textsc{Closest Substring}~\cite{FellowsGN2002} problems, the problem has a ptas~\cite{Deng2002} but they show that it is \Wone{}-hard when parameterized by any combination of the parameters $d_{g}$, $d_{b}$ and the number of ``good'' or ``bad'' strings. Under sufficient restriction they demonstrate an \FPT{} result, requiring a binary alphabet, a `dual' parameter $d'_{g} = L - d_{g}$ and that $d'_{g}$ is optimal in the sense that it is the minimum possible value. We note that, in relation to the problems studied here, although the number of $\ast$ symbols in the patterns provides an upper-bound for the Hamming distance, the Hamming distance for a set of strings may be much lower; consider a set of strings with one position set to $1$ and all others to $0$ such that for every possible position there is a string with a $1$ at that point, then the string (or indeed substring) of all $0$ has Hamming distance at most one from each input string, but a single pattern would need to be entirely $\ast$ symbols to match the entire set.

Hermelin and Rozenberg introduce a further variant of the \textsc{Closest String} problem~\cite{HermelinR15}, the \textsc{Closest String with Wildcards} problem. The input is a set of strings $\{s_{i}\}$, which may include wildcard characters, and an integer $d$. The goal is to find a string with hamming distance at most $d$ to each $s_{i}$. The solution is required to have no wildcard characters. The examine a number of parameters: the length $n$ of the input strings, the number $m$ of input strings, $d$, the number $\Card{\Sigma}$ of characters in the alphabet, and the minimum number $k$ of wildcard characters in any input string. They show that the problem is in \FPT{} (with varying explicit running times) when parameterized by $m$, $m+n$, $\Card{\Sigma} + k + d$ and $k+d$. They also show that the special case where $d=1$ can be solved in polynomial time, whereas the problem is \NP{}-hard for every $d \geq 2$.

Bulteau \emph{et al.}~\cite{BulteauHKN2014} also give a survey of the parameterized complexity of a variety of more distantly related string problems, with similar multivariate parameterizations as in other work in this area. They cover, amongst others, \textsc{Closest String}, \textsc{Closest Substring}, \textsc{Longest Common Subsequence}, \textsc{Shortest Common Supersequence}, \textsc{Shortest Common Superstring}, \textsc{Multiple Sequence Alignment} and \textsc{Minimum Common String}.

Introduced by Cannon and Cowen~\cite{CannonC04}, the \textsc{Class Cover} problem is a geometric relative of \PI{} where the input is two sets of points colored red and blue, with the goal of selecting a minimum set of blue points (centers) that ``cover'' the full set of blue points, in the sense that any blue point is closer to its nearest center than any red point. It is \NP{}-hard with an $O(\log n + 1)$-factor approximation algorithm, bearing a close similarity to \DS{}.

\section{Preliminaries and Definitions}
\label{sec:prelims}

\sloppypar We now give the relevant definitions for the complexity analysis that follows. In the reductions we use the well known \DS{} and \VC{} problems. The graphs taken as input for these problems are simple, undirected and unweighted. To assist with notation and indexing, we take the vertex set $V(\mathfrak{G})$ of a graph $\mathfrak{G}$ to be the set $\{1,\ldots ,n\}$. The edge set $E(\mathfrak{G})$ is then a set of pairs drawn from $V(\mathfrak{G})$ and we denote the edge between vertices $i$ and $j$ by $ij$ ($=ji$). The \SetCover{} and \kFS{} problems are also employed. The problems are defined as follows:

\problem{\DS{}}{A graph $\mathfrak{G}$ and an integer $k$.}{Is there a set $V' \subseteq V(\mathfrak{G})$ with $\Card{V'} \leq k$ such that for every $u \in V(\mathfrak{G})$ there exists a $v \in V'$ with $u \in N(v)$?}

\problem{\VC{}}{A graph $\mathfrak{G}$ and an integer $k$.}{Is there a set $V' \subseteq V(\mathfrak{G})$ with $\Card{V'} \leq k$ such that for every $uv \in E(\mathfrak{G})$ we have $u \in V'$ or $v \in V'$?}

\problem{\SetCover{}}{A base set $U$, a set $S \subseteq \mathcal{P}(U)$ and an integer $k$.}{Is there a set $S' \subseteq S$ with $|S'| \leq k$ such that $\bigcup S' = U$?}

\problem{\kFS{}}{An $n\times m$ $0\text{-}1$ matrix $M$, an $n\times 1$ $0\text{-}1$ vector $f$ and an integer $k$.}{Is there a set of indices $I \subseteq \{1,\ldots, m\}$ with $\Card{I} \leq k$ such that for all $a,b$ where $f_{a} \neq f_{b}$ there exist $i \in I$ such that $M_{a,i} \neq M_{b,i}$?}

We note the following key classification results:

\begin{itemize}
\item \sloppypar \DS{} is \NP{}-complete, $O(\log n)\text{-}\APX{}$-hard\footnote{That is there exists some $c > 0$ such that \DS{} has no $c\cdot\log n$-factor approximation algorithm unless $\P{} = \NP{}$.} and \Wtwo{}-complete when parameterized by $k$, the size of the dominating set.
\item \sloppypar \VC{} is \NP{}-complete and \APX{}-hard, and remains \NP{}-complete when the input is a planar graph~\cite{GareyJohnson79}.
\item \sloppypar \SetCover{} is \Wtwo{}-complete when parameterized by the size of the set cover.
\item \sloppypar \kFS{} is \Wtwo{}-complete when parameterized by the size of the feature set~\cite{CottaMoscato2002}.
\end{itemize}

We also employ a parameterized version of the \textsc{Model Checking} problem, which takes as input a finite structure and a logical formula and asks the question of whether the structure is a model of the formula, \emph{i.e.} whether there is a suitable assignment of elements of the universe of the structure to variables of the formula such that the formula evaluates to true under that assignment. The parameter is the length of the logic formula. While we informally introduce the finite structural elements as needed, we briefly describe here the fragments of first-order logic we employ. Let $\Sigma_{0} = \Pi_{0}$ be the set of unquantified Boolean formulae. The classes $\Sigma_{t}$ and $\Pi_{t}$ for $t > 0$ can be defined recursively as follows:
\begin{align*}
\Sigma_{t} &= \{\exists x_{1}\ldots \exists x_{k}\varphi \mid \varphi \in \Pi_{t-1}\}\\
\Pi_{t} &= \{\forall x_{1} \ldots \forall x_{k} \varphi \mid \varphi \in \Sigma_{t-1}\}
\end{align*}
The class $\Sigma_{t,u}$ is the subclass of $\Sigma_{t}$ where each quantifier block after the first existential block has length at most $u$. We note that trivially $\Pi_{t-1} \subset \Sigma_{t}$. We note that these classes are specified in prenex normal form, and are, in general, not robust against Boolean combinations of formulae. In general, the process of converting a formula to prenex normal form (where all quantifiers are ``out the front'') increases the number of quantifier alternations. An analog of the $\Sigma$ classes is $\Sigma^{\ast}_{t,u}$. Let $\Theta_{0,u}$ be the set of quantifier free formulae, and $\Theta_{t,u}$ for $t > 0$ be the set of Boolean combinations of formulae where each leading quantifier block is existential and quantifies over a formula in $\Theta_{t-1,u}$, where the length of each quantifier block is at most $u$. That is, the formulae in $\Theta_{t,u}$ are not required to be in prenex normal form, and Boolean connectives may precede some quantifiers. We can deal with leading universal quantification by the normal expedient of the introduction of a trivial existential block. Then $\Sigma^{\ast}_{t,u}$ is the class of formulae of the form $\exists x_{1} \ldots \exists x_{k} \varphi$ where $\varphi \in \Theta_{t-1,u}$ and where $k$ may be greater than $u$. 

Thus we refer to the \textsc{Model Checking} problem as \textsc{MC}($\Phi$) where $\Phi$ is the first-order fragment employed. In the parameterized setting, \MCSigmatu{} is \Wt{}-complete for every $u \geq 1$, and \MCSigmaStartu{} is \Wstart{}-complete for every $u \geq 1$. The $\mathsf{W}^{\ast}$-hierarchy is the hierarchy analogous to the $\mathsf{W}$-hierarchy obtained from using \MCSigmaStartu{} as the complete problem instead of \MCSigmatu{}. While it is known that $\Wone{} = \Wstarone{}$ and $\Wtwo{} = \Wstartwo{}$; for $t \geq 3$, the best known containment relationship is $\Wt \subseteq \Wstart{} \subseteq \Wtwotminustwo{}$. For more detail on these results and the full definitions relating to first-order logic and structures we refer the reader to~\cite{FlumGrohe2006}. The $\mathsf{W}^{\ast}$-hierarchy, introduced by Downey, Fellows and Taylor~\cite{DowneyFT96} but more fully explored later~\cite{ChenFlumGrohe2007,FlumGrohe2006} is a parameterized hierarchy which takes into account the Boolean combinations of quantified first-order formulae, but is otherwise similar to the more usual $\mathsf{W}$-hierarchy.

In several of our intractability results we make use of the class \paraNP{}, and a useful corollary due to Flum and Grohe with a detailed explanation in~\cite{FlumGrohe2006} (presented as Corollary 2.16). The class \paraNP{} is the direct parameterized complexity translation of \NP{}, where we replace ``polynomial-time'' with ``fixed-parameter tractable time'' (or fpt-time in short) in the definition. Flum and Grohe's result states that if, given a parameterized problem $(\Pi,\kappa)$, the classical version of the problem $\Pi$ is \NP{}-complete for at least one fixed value of $\kappa$, then $(\Pi, \kappa)$ is \paraNP{}-complete. As may be expected, $\FPT{} = \paraNP{}$ if and only if $\P{} = \NP{}$, thus \paraNP{}-completeness is strong evidence of intractability. We also make reference to parameterized approximation. A parameterized approximation algorithm is, in essence, a standard approximation algorithm, but where we relax the running time to fpt-time, rather than polynomial-time. We refer to~\cite{Marx2008} for a full introduction to this area.

The other parameterized complexity theory employed is more standard, thus for general definitions we refer the reader to standard texts~\cite{DowneyFellows2013,FlumGrohe2006}.

We write $A \leq_{FPT} B$ to denote that there exists a parameterized reduction from problem $A$ to problem $B$, and similarly $A \leq_{P} B$ to denote the existence of a polynomial-time many-one reduction from problem $A$ to problem $B$. We also use \emph{strict polynomial-time reductions} to obtain some approximation results. A strict reduction is one that, given two problems $A$ and $B$, guarantees that the approximation ratio for $A$ is at least as good as that of $B$. In the cases we present, we employ them for approximation hardness results, so the precise ratio is not discussed. For a full definition of strict reductions (and other approximation preserving reductions) we refer to~\cite{Crescenzi1997}.

\begin{definition}[Pattern]\label{def:pattern}
  A \emph{pattern} is a string over an alphabet $\Sigma$ and a special symbol $\ast$.
\end{definition}

Given a string $s \in \Sigma^{\ast}$ and an integer $i$, we denote the $i$th symbol of $s$ by $s[i]$.

\begin{definition}[Compatible]\label{def:compatible}
A pattern $p$ is compatible with a string $g$, denoted $p \to g$, if for all $i$ such that $p[i] \neq \ast$ we have $g[i] = p[i]$. If a pattern and string are not compatible, we write $p \not\to g$. We extend this notation to sets of strings, writing $p \to G$ to denote $\forall g \in G, p \to g$ and $P \to G$ for $\forall g \in G \exists p \in P, p \to g$.
\end{definition}

\begin{definition}[$G$-$B$-Separated Sets]\label{def:separated_sets}
A set $P$ of patterns \emph{$G$-$B$-separates} an ordered pair $(G,B)$ of sets of strings, written $P \to (G,B)$ if
\begin{itemize}
\item $P \to G$, and
\item for every $b \in B$ and $p \in P$ we have $p \not\to b$.
\end{itemize}
\end{definition}

Thus we can state the central problem for this paper:

\problem{\PI{}}{A finite alphabet $\Sigma$, two disjoint sets $G, B \subseteq \Sigma^{n}$  of strings and an integer $k$.}{Is there a set $P$ of patterns such that $\Card{P} \leq k$ and $P \to (G,B)$?}

The complexity analysis of the problem in the parameterized setting leads to the definition of a second, subsidiary problem which allows a convenient examination of sets of strings which are very similar.

\begin{definition}[Small String]\label{def:small_string}
  A string $s$ over an alphabet $\Sigma$ is $d$-\emph{small} if, given an identified symbol $\sigma \in \Sigma$, for exactly $d$ values of $i$, $p[i] \neq \sigma$.

We call $\sigma$ the \emph{base} symbol.

A set of strings $S$ is $d$-small if, given a fixed base symbol all strings in $S$ are $d$-small.
\end{definition}

This restriction on the structure of the input gives further insight into the complexity of \PI{} and is key to some of the tractability results in Section~\ref{sec:easy}. For convenience we phrase a restricted version of the \PI{} problem:

\problem{\PIS{}}{An alphabet $\Sigma$, two disjoint $d$-small sets $G, B \subseteq \Sigma^{n}$, an integer $k$.}{Is there a set $P$ of patterns with $\Card{P} \leq k$ such that $P \to (G,B)$?}

From the perspective of multiparent recombination, minimizing the number of wildcard symbols in each pattern is also an interesting objective:

\problem{\PIP{}}{An alphabet $\Sigma$, two disjoint sets $G, B \subseteq \Sigma^{n}$, integers $k$ and $r$.}{Is there a set $P$ of patterns with $\Card{P} \leq k$ such that $P \to (G,B)$ and for each $p \in P$ the number of $\ast$ symbols in $p$ is at most $r$?}

We implicitly define the obvious intersection of the two restricted problems, \PIPS{}.

From a combinatorial perspective, the inverse problem is also interesting:

\problem{\PIp{}}{An alphabet $\Sigma$, two disjoint sets $G, B \subseteq \Sigma^{n}$, integers $k$ and $s$.}{Is there a set $P$ of patterns with $\Card{P} \leq k$ such that $P \to (G,B)$ and for each $p \in P$ the number of non-$\ast$ symbols in $p$ is at most $s$?}

\section{Hard Cases of the Pattern Identification Problem}
\label{sec:hard}

We first examine the intractable cases of the \PI{} problem. This narrows down the source of the combinatorial complexity of the problem.

\begin{theorem}\label{thm:PI_W[2]-c}
 \sloppypar \PI{} is \Wtwo{}-hard when parameterized by $k$, even if $\Card{\Sigma} = 2$ and $\Card{B} = 1$.
\end{theorem}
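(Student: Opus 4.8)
The natural target is a parameterized reduction from \DS{} parameterized by the size $k$ of the dominating set, since \DS{} is \Wtwo{}-complete and the parameter (number of patterns) of \PI{} should correspond to the number of dominating vertices. Given a graph $\mathfrak{G}$ on vertex set $\{1,\ldots,n\}$ and integer $k$, the plan is to build an instance $(\Sigma, G, B, k)$ of \PI{} with $\Card{\Sigma}=2$, say $\Sigma = \{0,1\}$, and $\Card{B}=1$, where strings have length $n$. The intended correspondence is: a pattern should ``encode'' a vertex $v$ by having its non-$\ast$ positions pick out $v$ (or $N[v]$), and ``$p$ compatible with $g_u$'' should mean ``$v$ dominates $u$''. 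Concretely, I would let $G = \{g_1,\ldots,g_n\}$ where $g_u$ is the characteristic vector of the closed neighbourhood $N[u]$ (i.e. $g_u[i]=1$ iff $i \in N[u]$), and the single forbidden string $b = 0^n$. Then a pattern $p_v$ that has a single $1$ in position $v$ and $\ast$ everywhere else is compatible with $g_u$ exactly when $v \in N[u]$, i.e. when $v$ dominates $u$, and is automatically incompatible with $b=0^n$ because $p_v$ requires a $1$ in position $v$. A set $P$ of $k$ such patterns covers all of $G$ iff the corresponding $k$ vertices form a dominating set.

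The main thing to get right is the converse direction: I must ensure that in a \Yes{}-instance of the constructed \PI{} problem I can assume each pattern is of the ``single $1$'' form, or at least extract a dominating set of size $\le k$ from an arbitrary solution. The key observation is that any pattern $p$ compatible with some $g_u \in G$ but incompatible with $b = 0^n$ must have at least one non-$\ast$ position, and every non-$\ast$ position of $p$ must carry the symbol $1$ (if $p$ had a $0$ in position $i$ it would still need to differ from $b$ somewhere, but more importantly a $0$ at position $i$ is ``useless'' for incompatibility with $b$ and only restricts which $g_u$ it matches) — so without loss of generality $p$'s non-$\ast$ entries are all $1$, located at some set $T \subseteq \{1,\ldots,n\}$, $T \neq \emptyset$. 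Such a $p$ is compatible with $g_u$ iff $T \subseteq N[u]$, i.e. iff $u$ is dominated by \emph{every} vertex in $T$; hence the set of $g_u$ that $p$ matches is a subset of the set matched by $p_t$ for any single $t \in T$. Therefore replacing each pattern $p$ by $p_t$ for an arbitrary chosen $t$ in its support only enlarges the matched set, so from any size-$\le k$ solution we obtain $k$ vertices whose closed neighbourhoods cover $V(\mathfrak{G})$, i.e. a dominating set of size $\le k$.

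I would then just check the remaining routine points: the reduction is computable in polynomial (hence fpt) time; it maps the parameter $k$ to $k$ unchanged, so it is a parameterized reduction; $G$ and $B$ are disjoint (we may assume $\mathfrak{G}$ has no isolated vertices, or pad, so that no $g_u$ equals $0^n$ — an isolated vertex makes \DS{} trivially answerable anyway, or we can preprocess it away); and the bounds $\Card{\Sigma}=2$, $\Card{B}=1$ hold by construction. One small subtlety to address is whether we need $G$ and $B$ genuinely disjoint as required by the problem definition: if every vertex has a neighbour then $g_u \neq 0^n$ for all $u$, so $b \notin G$; the degenerate case of isolated vertices can be handled by noting an isolated vertex forces itself into any dominating set and can be deleted (decrementing $k$), or alternatively by adding a dummy coordinate.

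The step I expect to be the main obstacle is the ``normalization'' of an arbitrary pattern solution to the single-$1$ form — establishing cleanly that patterns with $0$-entries or with multi-position $1$-supports can be safely replaced without losing coverage of $G$ and without gaining compatibility with $b$. Everything else is bookkeeping. A secondary point worth a sentence is confirming that $\Card{\Sigma}=2$ really suffices here (it does, since closed-neighbourhood incidence is inherently Boolean), which is what makes the ``even if'' clause nontrivial.
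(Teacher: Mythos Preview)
Your proposal is correct and follows essentially the same reduction as the paper: closed-neighbourhood characteristic vectors for $G$, $B=\{0^n\}$, and single-$1$ patterns corresponding to dominating vertices. The only differences are cosmetic: in the converse direction the paper skips your WLOG normalization step and directly picks, for each pattern $p$, any position $i$ with $p[i]=1$ (which must exist since $p \not\to 0^n$), observing that such $i$ dominates every $j$ with $p \to g_j$; and the paper does not explicitly address the $G \cap B = \emptyset$ / isolated-vertex issue that you (rightly) flag.
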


\begin{lemma}\label{lemma:DS<=PI}
  $\DS{} \leq_{FPT} \PI{}$.
\end{lemma}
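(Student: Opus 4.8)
The plan is to prove the lemma by exhibiting an explicit reduction that is computable in polynomial time and leaves the parameter $k$ untouched, hence is in particular a parameterized reduction; as a bonus the constructed instances will have $\Card{\Sigma}=2$ and $\Card{B}=1$, which is exactly what Theorem~\ref{thm:PI_W[2]-c} requires. The design principle is to make a pattern behave like a ``chosen vertex'' whose set of compatible $G$-strings is precisely the closed neighbourhood of that vertex, so that a small set of patterns covering $G$ corresponds to a small set of closed neighbourhoods covering $V(\mathfrak{G})$.

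Concretely, given $(\mathfrak{G},k)$ with $V(\mathfrak{G})=\{1,\dots,n\}$, I would output the \PI{} instance with $\Sigma=\{0,1\}$, the same integer $k$, string set $G=\{g_u \mid u\in V(\mathfrak{G})\}\subseteq\{0,1\}^{n}$ where $g_u[v]=1$ exactly when $v\in N[u]$ (equivalently, when $v$ dominates $u$), and forbidden set $B=\{0^{n}\}$. Since $g_u[u]=1$ for every $u$, the string $0^{n}$ is not in $G$, so $G$ and $B$ are disjoint as required. The intended solution patterns are $p_v$, for $v\in V(\mathfrak{G})$, defined by $p_v[v]=1$ and $p_v[i]=\ast$ for $i\neq v$.

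The two directions of correctness are then routine. If $D$ is a dominating set with $\Card{D}\le k$, then $P=\{p_v \mid v\in D\}$ works: for each $u$ pick $v\in D\cap N[u]$, so $g_u[v]=1=p_v[v]$ and all other positions of $p_v$ are $\ast$, giving $p_v\to g_u$; and no $p_v$ is compatible with $0^{n}$ because $p_v[v]=1$. Conversely, suppose $P$ is a solution with $\Card{P}\le k$. The single forbidden string $0^{n}$ forces each $p\in P$ to have at least one non-$\ast$ coordinate holding the symbol $1$ --- otherwise $p\to 0^{n}$ --- so fix such a coordinate $\phi(p)$. For every $u$ there is some $p\in P$ with $p\to g_u$, and then $g_u[\phi(p)]=p[\phi(p)]=1$, i.e.\ $\phi(p)\in N[u]$; hence $\{\phi(p)\mid p\in P\}$ is a dominating set of size at most $\Card{P}\le k$.

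The step I would treat most carefully is this converse, and in particular the claim that the minimalist gadget $B=\{0^{n}\}$ really does exclude ``over-permissive'' patterns --- a priori a pattern could match a set of $G$-strings not contained in any one closed neighbourhood. The point that makes it work is that, once $B$ forces a $1$ into some coordinate $v$ of $p$, every string compatible with $p$ must carry a $1$ in coordinate $v$, and by construction those are exactly the $g_u$ with $v\in N[u]$; so the match-set of any solution pattern lies inside a single closed neighbourhood, which is precisely what drives the ``patterns covering $G$ $\Rightarrow$ closed neighbourhoods covering $V(\mathfrak{G})$'' implication. Everything else is bookkeeping: the map is polynomial-time, fixes $k$, and produces instances with a binary alphabet and one forbidden string, so combined with the \Wtwo{}-hardness of \DS{} under the standard parameterization it also yields Theorem~\ref{thm:PI_W[2]-c}.
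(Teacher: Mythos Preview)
Your proposal is correct and follows essentially the same route as the paper: the same reduction (binary alphabet, $G$ consisting of the characteristic vectors of closed neighbourhoods, $B=\{0^{n}\}$, parameter $k$ unchanged), the same forward construction via single-$1$ patterns $p_v$, and the same reverse argument that each solution pattern must contain a $1$ in some coordinate, from which one extracts a dominating vertex. Your presentation is slightly cleaner in naming the selection map $\phi$ and in explicitly noting that $G\cap B=\emptyset$, but the content is identical to the paper's proof.
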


\begin{figure}[htb]
  \centering
  \subfloat{\raisebox{-0.5\height}{\begin{tikzpicture}[shorten >=1pt,node distance=2cm,auto] 
    \node[draw,circle,fill=red!40] (q_0)                      {$1$}; 
    \node[draw,circle] (q_1) [above right of=q_0] {$2$}; 
    \node[draw,circle] (q_2) [above left of=q_0]  {$3$};
    \node[draw,circle] (q_3) [below right of=q_0] {$4$};
    \node[draw,circle] (q_4) [below left of=q_0]  {$5$};

    \path
    (q_0) edge (q_1) edge (q_2) edge (q_3) edge (q_4)
    (q_1) edge (q_2)
    (q_3) edge (q_4);
  \end{tikzpicture}}}\hspace{48pt}$\rightarrow$\hspace{48pt}
  \subfloat{
    \begin{tabular}{c|ccccc}
      \multirow{5}{*}{$G$} & 1 & 1 & 1 & 1 & 1\\
      &1&1&1&0&0\\
      &1&1&1&0&0\\
      &1&0&0&1&1\\
      &1&0&0&1&1\\
      \midrule
      $B$ & 0 & 0 & 0 & 0 & 0
    \end{tabular}
  }\\
  \subfloat{
    $P = \{1\ast\ast\ast\ast\}$
  }
  \caption{An example of the reduction used in Lemma~\ref{lemma:DS<=PI} with $k=1$. The dominating set is highlighted in red, and the correspond set of patterns (a singleton) is shown.}
  \label{fig:DS_red_example}
\end{figure}
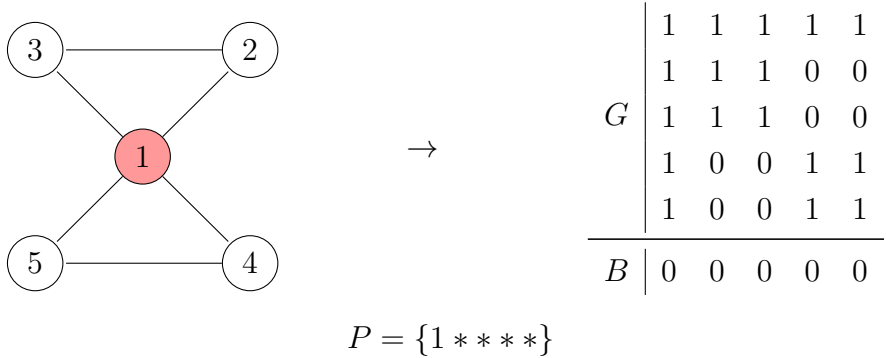

\begin{proof}
Let $(\mathfrak{G},k)$ be an instance of \DS{}. Let $n = \Card{V(\mathfrak{G})}$ and assume $V(\mathfrak{G}) = \{1,\ldots, n\}$. We construct an instance $(\Sigma, G, B, k)$ as follows:
\begin{enumerate}
\item $\Sigma = \{1, 0\}$,
\item $G = \{g_{1},\ldots, g_{n}\}$ where for each $i$, $g_{i} \in \Sigma^{n}$ where for every $j$, $g_{i}[j] = 1$ if $ij \in E(\mathfrak{G})$ or $i = j$ and $g_{i}[j] = 0$ otherwise, 
\item $B = \{0^{n} \}$.
\end{enumerate}

An example of the reduction is given in Figure~\ref{fig:DS_red_example}.

\begin{claim}
  If $(G,k)$ is a \Yes{} instance of \DS{} then $(\Sigma, G, B, k)$ is a \Yes{} instance of \PI{}.
\end{claim}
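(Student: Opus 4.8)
The plan is to turn a dominating set directly into a single pattern. Suppose $V' \subseteq V(\mathfrak{G})$ with $\Card{V'} \leq k$ dominates $\mathfrak{G}$. For each $v \in V'$ define a pattern $p_v \in (\Sigma \cup \{\ast\})^n$ by $p_v[v] = 1$ and $p_v[j] = \ast$ for all $j \neq v$; set $P = \{p_v \mid v \in V'\}$, so $\Card{P} \leq k$. I would then verify the two conditions of $P \to (G,B)$ separately.

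First, $P \to G$: take any $g_i \in G$. Since $V'$ is a dominating set, there is some $v \in V'$ with $i \in N(v)$ or $i = v$; in either case, by the construction of $G$ we have $g_i[v] = 1$. The pattern $p_v$ has a non-$\ast$ symbol only at position $v$, where it is $1 = g_i[v]$, so $p_v \to g_i$. Hence every string of $G$ is matched by some pattern in $P$.

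Second, no pattern in $P$ matches the unique string $0^n \in B$: each $p_v$ has $p_v[v] = 1 \neq 0 = 0^n[v]$, so $p_v \not\to 0^n$. Together these give $P \to (G,B)$, so $(\Sigma, G, B, k)$ is a \Yes{} instance of \PI{}.

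I do not anticipate a genuine obstacle here — the construction was designed precisely so that the ``single vertex'' pattern $1$ at position $v$ and $\ast$ elsewhere works; the only thing to be careful about is invoking the domination property in the right direction (every vertex $i$, including those in $V'$ themselves, is covered), and noting that the definition of $G$ puts a $1$ at position $v$ in $g_i$ exactly when $v$ dominates $i$. The converse direction (a small pattern set yields a small dominating set) is the substantive half of Lemma~\ref{lemma:DS<=PI}, but that is a separate claim and not what is being asked for here.
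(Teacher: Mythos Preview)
Your proof is correct and follows essentially the same approach as the paper: construct one pattern per dominating vertex $v$, with a $1$ in position $v$ and $\ast$ elsewhere, then verify compatibility with $G$ via domination and incompatibility with $0^n$ via the forced $1$. The only slip is the opening phrase ``turn a dominating set directly into a single pattern'' --- you of course build one pattern \emph{per vertex} of the dominating set, as your actual construction makes clear.
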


Let $D \subseteq V(\mathfrak{G})$ with $\Card{D} \leq k$ be a dominating set witnessing that $(\mathfrak{G},k)$ is a \Yes{} instance. We can construct a witness set of patterns $P$ with $\Card{P} = \Card{D}$ such that $P \to (G,B)$. For each $i \in D$, we create a pattern $p_{i}$ where $p_{i}[i] = 1$ and $p_{i}[j] = \ast$ for all $j \neq i$.

As $D$ is a dominating set, for every vertex $j \in V(\mathfrak{G})$ there is a vertex $i \in D$ such that $ij \in E(\mathfrak{G})$. Then for string $g_{j} \in G$, pattern $p_{i}$ is compatible with $g_{j}$ as by construction $g_{j}[i] = 1$. Therefore for every string $g \in G$ there exists $p \in P$ such that $p \to g$. 

Moreover there is no $p \in P$ such that $p \to b$ where $b \in B$. As $B$ consists of the single string $b = 0^{n}$ and for each $p \in P$ exactly one element is neither $0$ nor $\ast$ there is one position where the pattern does not match $b$.

Thus $(\Sigma, G, B, k)$ is a \Yes{} instance.

\begin{claim}
  If $(\Sigma, G, B, k)$ is a \Yes{} instance of \PI{} then $(\mathfrak{G},k)$ is a \Yes{} instance of \DS{}
\end{claim}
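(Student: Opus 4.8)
The plan is to reverse the construction: from a small separating set of patterns, read off a dominating set by extracting, from each pattern, a single coordinate carrying the symbol $1$.

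First I would observe that every pattern $p \in P$ must have a $1$ somewhere. Indeed $P \to (G,B)$ forces $p \not\to 0^{n}$, so there is a coordinate $\ell$ with $p[\ell] \neq \ast$ and $p[\ell] \neq 0$; since $\Sigma = \{0,1\}$ this means $p[\ell] = 1$. For each $p \in P$ fix one such coordinate $v_{p} \in \{1,\ldots,n\} = V(\mathfrak{G})$ and set $D = \{v_{p} : p \in P\}$, so that $\Card{D} \leq \Card{P} \leq k$. I would then show $D$ is a dominating set of $\mathfrak{G}$. Let $u \in V(\mathfrak{G})$ be arbitrary and consider $g_{u} \in G$. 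Since $P \to G$, some $p \in P$ satisfies $p \to g_{u}$; as $p[v_{p}] = 1 \neq \ast$, compatibility forces $g_{u}[v_{p}] = 1$. By the definition of $G$, $g_{u}[v_{p}] = 1$ holds precisely when $v_{p}u \in E(\mathfrak{G})$ or $v_{p} = u$, i.e.\ precisely when $u$ is dominated by $v_{p} \in D$. As $u$ was arbitrary, $D$ witnesses that $(\mathfrak{G},k)$ is a \Yes{} instance of \DS{}.

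I do not expect a serious obstacle here; the one point needing care is the bookkeeping that a single $1$-coordinate per pattern already suffices — that compatibility of $p$ with $g_{u}$ at the one position $v_{p}$ (rather than at all of $p$'s non-$\ast$ positions) is already enough to place $u$ in the closed neighbourhood of $v_{p}$ — together with using the symmetry of $E(\mathfrak{G})$ when passing between ``$g_{u}[v_{p}] = 1$'' and ``$v_{p}u \in E(\mathfrak{G})$''. Combining this claim with the forward claim yields $\DS{} \leq_{FPT} \PI{}$ — in fact a polynomial-time, parameter-preserving reduction producing instances with $\Card{\Sigma} = 2$ and $\Card{B} = 1$ — which together with the \Wtwo{}-completeness of \DS{} parameterized by $k$ gives Theorem~\ref{thm:PI_W[2]-c}.
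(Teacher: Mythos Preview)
Your proposal is correct and follows essentially the same approach as the paper: from each pattern extract a coordinate carrying a $1$ (which must exist since $p \not\to 0^{n}$), and use compatibility of $p$ with $g_{u}$ at that coordinate to conclude $u$ lies in the closed neighbourhood of the chosen vertex. The paper phrases this slightly differently by first considering the full set $V_{p}$ of all $1$-positions of $p$ before selecting one arbitrarily, and it adds a harmless minimality assumption (that every $p$ is compatible with some $g \in G$), but the underlying argument is the same.
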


Let $P$ with $\Card{P} \leq k$ be the witness set of patterns that $(\Sigma, G, B, k)$ is a \Yes{} instance. Inductively, we may assume that every $p \in P$ is compatible with at least one element of $G$, if not, $P \setminus \{p\}$ constitutes an alternate witness.

First we note that no $p \in P$ can consist of only $\ast$ and $0$ symbols, as this would be compatible with the single element of $B$. Therefore each $p \in P$ has at least one $i$ such that $p[i] = 1$.

Consider a $p \in P$, and the corresponding set of vertices $V_{p}$ (i.e. each position $i$ where $p[i] = 1$). Let $g_{j} \in G$ be a string such that $p \to g_{j}$. By construction for every $i \in V_{p}$, $ij \in E(\mathfrak{G})$. Let $V_{p\to g}$ be the set of vertices corresponding to the set $G_{p} \subseteq G$ where $p \to G_{p}$. Each vertex $j \in V_{p \to g}$ is adjacent (or identical) to every vertex in $V_{p}$. Thus we may select arbitrarily a single vertex from $V_{p}$ to be in the dominating set $D$.

Thus we have $D$ with $\Card{D} \leq \Card{P} \leq k$, where every vertex in $V(\mathfrak{G})$ is adjacent (or identical) to some vertex in $D$. Therefore $(\mathfrak{G},k)$ is a \Yes{} instance.

The construction can be clearly performed in fpt time (in fact, polynomial time), and the lemma follows.
\end{proof}

\begin{proof}[Proof of Theorem~\ref{thm:PI_W[2]-c}]
  The theorem follows immediately from Lemma~\ref{lemma:DS<=PI}.
\end{proof}

The structure of the reduction then gives the following:

\begin{corollary}\label{cor:PIp_W[2]-complete}
\sloppypar \PIp{} is \Wtwo{}-complete when parameterized by $k$ and \NP{}-complete even when $s = 1$, $\Card{\Sigma} = 2$ and $\Card{B} = 1$.
\end{corollary}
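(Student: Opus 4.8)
The plan is to observe that the reduction $\DS{} \leq_{FPT} \PI{}$ from Lemma~\ref{lemma:DS<=PI} already produces, on \Yes{} instances, a witness set of patterns each of which has exactly one non-$\ast$ symbol. Hence the same reduction works as a reduction $\DS{} \leq_{FPT} \PIp{}$ with $s = 1$, $\Card{\Sigma} = 2$ and $\Card{B} = 1$, giving \Wtwo{}-hardness of \PIp{} under the parameterization by $k$ (and \NP{}-hardness by forgetting the parameter, since the construction is polynomial-time and $k$ is part of the instance). For the easy direction I would first argue equivalence of instances: if $(\mathfrak{G},k)$ is a \Yes{} instance of \DS{}, the patterns $p_i$ built in the proof of Lemma~\ref{lemma:DS<=PI} each have a single $1$ and the rest $\ast$, so they satisfy the extra constraint $s \le 1$; conversely, if $(\Sigma,G,B,k)$ is a \Yes{} instance of \PIp{} with this witness set, it is in particular a \Yes{} instance of \PI{}, so by the \No{}-to-\No{} direction already proved in Lemma~\ref{lemma:DS<=PI}, $(\mathfrak{G},k)$ is a \Yes{} instance of \DS{}. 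This establishes hardness.

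For membership in \Wtwo{}, I would give a parameterized reduction in the other direction, $\PIp{} \leq_{FPT} \SetCover{}$ (or directly to \MCSigmaTwoOne{}, or to \DS{}/\kFS{}), parameterized by $k$. The idea: a pattern with at most $s$ non-$\ast$ positions is determined by a choice of at most $s$ positions together with the symbols placed there, and for each such pattern $p$ one can check in polynomial time whether $p \not\to b$ for all $b \in B$; call such a pattern \emph{admissible}. An admissible pattern "covers" the set of strings in $G$ with which it is compatible. So $(\Sigma,G,B,k,s)$ is a \Yes{} instance of \PIp{} iff the set system whose ground set is $G$ and whose sets are $\{\, \{g \in G : p \to g\} : p \text{ admissible} \,\}$ has a cover of size at most $k$. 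The subtlety is that the number of admissible patterns can be exponential in the input (there are up to $\binom{n}{s}\Card{\Sigma}^{s}$ candidate patterns), so this is not obviously a polynomial-time reduction to \SetCover{} in general; however for the stated \Wtwo{}-completeness claim it suffices to exhibit \emph{membership}, which only needs a parameterized (fpt-time) reduction, and enumerating all candidate patterns and keeping a maximal one for each distinct compatibility set is fine as an fpt-time — indeed here polynomial-space, exponential-time in $n$ but that is allowed for a reduction to a \Wtwo{}-complete problem only if we are careful.

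Cleaner is to route membership through \MCSigmaTwoOne{}: I would write a first-order sentence, over a structure encoding $G$, $B$, the positions, and the alphabet, of the shape $\exists$ (a block of $k$ "pattern" descriptors) $\forall g \, \exists$ (one of the $k$ descriptors) such that a quantifier-free formula asserts "this descriptor is admissible and compatible with $g$", where each descriptor is itself a bounded tuple of existential choices (at most $s$ position/symbol pairs) that can be folded into the leading existential block. Since $s$, $\Card{\Sigma}$ and the formula length are bounded by constants in the \NP{}-hardness statement and the formula length is bounded by a function of $k$ (with $s$ fixed, or more generally treating $s$ as a parameter) in the \Wtwo{} statement, and the second quantifier block has length $1$, this is an instance of \MCSigmaTwoOne{}, which is \Wtwo{}-complete; this gives $\PIp{} \leq_{FPT} \MCSigmaTwoOne{}$ and hence membership in \Wtwo{}. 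Combined with the hardness above, \PIp{} is \Wtwo{}-complete parameterized by $k$, and the polynomial-time reduction from \DS{} gives \NP{}-completeness even for $s=1$, $\Card{\Sigma}=2$, $\Card{B}=1$ (membership in \NP{} is immediate: guess $P$ with $\Card{P}\le k$ and each pattern having at most $s$ non-$\ast$ symbols, and verify $P \to (G,B)$ in polynomial time). The main obstacle I anticipate is getting the logical encoding for the membership direction exactly right — in particular packing the "choose at most $s$ positions and their symbols" sub-choices into the first existential block so that the remaining alternation is genuinely just $\forall\exists_{\le 1}$ and the formula really lands in $\Sigma_{2,1}$ rather than $\Sigma_{2,u}$ for larger $u$ or in a higher level of the hierarchy.
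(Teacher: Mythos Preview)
Your hardness argument is correct and matches the paper: the \DS{} reduction of Lemma~\ref{lemma:DS<=PI} already yields patterns with a single non-$\ast$ symbol, so it is simultaneously a reduction to \PIp{} with $s=1$, $\Card{\Sigma}=2$, $\Card{B}=1$. Membership in \NP{} is also handled correctly.

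The gap is in the \Wtwo{} membership. Your \SetCover{} route does not work: enumerating $\binom{n}{\le s}\Card{\Sigma}^{s}$ candidate patterns is exponential in $n$, hence not fpt-time in $k$ (or $k+s$), so this is not a parameterized reduction. Your \MCSigmaTwoOne{} route is the right target, but the formula shape you sketch, $\exists(\text{descriptors})\,\forall g\,\exists(\text{one of the }k)\,\varphi$, lies in $\Sigma_{3,1}$, not $\Sigma_{2,1}$: the class $\Sigma_{2,1}$ admits only a single universal quantifier after the leading existential block and \emph{no} further quantifiers, so any inner $\exists$---even of length one---pushes you up a level. Your closing remark that ``$\forall\exists_{\le 1}$'' should land in $\Sigma_{2,1}$ is exactly the mistake. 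The fix, which is precisely what the paper does, is to replace that inner existential by a finite disjunction $\bigvee_{l\in[k]}$ in the quantifier-free matrix; since $k$ and $s$ are parameters the formula length stays parameter-bounded. Concretely, the paper guesses $2ks$ objects $i_{l,b},c_{l,b}$ (positions and symbols for each of the $k$ patterns) in the leading existential block, universally quantifies a single string index $j$, and writes in the matrix
\[
\bigl(Gj \to \bigvee_{l\in[k]}\bigwedge_{b\in[s]} Cji_{l,b}=c_{l,b}\bigr) \wedge \bigl(Bj \to \bigwedge_{l\in[k]}\bigvee_{b\in[s]} Cji_{l,b}\neq c_{l,b}\bigr).
\]
Note also that the $B$-constraint is handled by the same $\forall j$ via the predicate $Bj$, not as a separate ``admissibility'' preprocessing step; in your sketch admissibility is left outside the formula, and expressing it inside would cost you an additional universal quantifier over $B$ unless you fold it into the single $\forall j$ as the paper does.
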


\begin{proof}
The \Wtwo{}-hardness is apparent from the proof of Lemma~\ref{lemma:DS<=PI} (in fact the restriction would make the proof simpler). To show containment in \Wtwo{}, we reduce \PIp{} to \MCSigmaTwoOne{}. The first-order structure is equipped with four unary relations $N$, $\Sigma$, $G$ and $B$ and a binary function symbol $C$. Each string is represented by an integer, according to an arbitrary fixed ordering. $Gi$ is true if string $i$ is in $G$, $Bi$ is true if string $i$ is in $B$. $\Sigma\sigma$ is true if $\sigma \in \Sigma$ and $Ni$ is true if $i \in \mathbb{N}$. The function $C:\mathbb{N} \times \mathbb{N} \to \Sigma$ is defined $Cij = \sigma$ if $\sigma$ is the $j$th symbol of string $i$.

We now provide the first-order formula expressing \PIp{}:
\begin{align*}
\exists i_{1,1},\ldots,i_{k,s},c_{1,1},\ldots,c_{k,s}\forall j ((\bigwedge_{l\in[k],b\in[s]} Ni_{l,b}) \wedge\\ (\bigwedge_{l\in[k],b\in[s]}\Sigma c_{l,b}) \wedge\\ (Gj \to (\bigvee_{l\in[k]}(\bigwedge_{b\in[s]}Cji_{l,b} = c_{l,b}))) \wedge\\ (Bj \to (\bigwedge_{l\in[k]}(\bigvee_{b\in[s]}Cji_{l,b} \neq c_{l,b}))))
\end{align*}

The formula states that a solution to \PIp{} consists of $k$ sets of $s$ symbols along with positions such that for each string in $G$, for at least one set of symbols, the string is compatible and for each string in $B$ no set of symbols is compatible.

Containment in \NP{} can be demonstrated by the usual polynomial verification approach (indeed in much the same format as the above formula).
\end{proof}

\begin{corollary}\label{cor:no_fpt_approx}
\sloppypar  \PI{} has no constant factor fpt-approximation algorithm unless $\FPT{} = \Wtwo{}$ and there exists a $c \geq 0$ such that \PI{} has no $c\cdot\log n$ polynomial time approximation algorithm unless $\P{} = \NP{}$, even when $\Card{\Sigma} = 2$ and the optimization goal is $\min k$.
\end{corollary}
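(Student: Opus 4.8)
The plan is to notice that the reduction in Lemma~\ref{lemma:DS<=PI} is not merely a parameterized reduction but an approximation-preserving one, and then to transport the known inapproximability of \DS{} through it.

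First I would revisit the two claims established inside the proof of Lemma~\ref{lemma:DS<=PI}. The first turns a dominating set of size $m$ into a set of patterns of size $m$ that $G$-$B$-separates the instance, and the second turns any separating set of patterns of size $m$ into a dominating set of size at most $m$; the second construction is explicit and runs in polynomial time. Hence the two minimization problems take the \emph{same} optimum on corresponding instances, and from \emph{any} feasible pattern set $P$ one recovers a feasible dominating set of cardinality at most $\Card{P}$. This makes $\mathfrak{G} \mapsto (\Sigma, G, B)$ a strict reduction (in the sense of Section~\ref{sec:prelims}) from the $\min k$ version of \DS{} to that of \PI{}, and simultaneously a parameterized reduction that carries the parameter $k$ over unchanged. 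I would also record two features of the construction needed below: it keeps $\Card{\Sigma} = 2$, and the strings it produces have length exactly $n = \Card{V(\mathfrak{G})}$, so the size measure $n$ of the \PI{} instance equals the number of vertices of $\mathfrak{G}$.

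Given this, both halves of the statement follow from the classification of \DS{} recalled in Section~\ref{sec:prelims} together with the known parameterized inapproximability of \DS{}. For the polynomial-time half: \DS{} is $O(\log n)$-\APX{}-hard, so there is some $c > 0$ admitting no polynomial-time $c\log n$-approximation unless $\P{} = \NP{}$; composing such an algorithm for \PI{} with the strict reduction yields one for \DS{}, and since both the optimum and the measure $n$ are preserved the ratio is preserved, so \PI{} inherits the same $c\log n$ inapproximability. For the fixed-parameter half: a constant-factor fpt-approximation for \PI{} with goal $\min k$, composed with the (polynomial-time, hence fpt-time) strict-and-parameterized reduction, would give a constant-factor fpt-approximation for \DS{} parameterized by solution size, which is known to be impossible unless $\FPT{} = \Wtwo{}$.

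The one genuinely delicate point is the bookkeeping in the second paragraph: one has to check that the ``reverse'' direction of Lemma~\ref{lemma:DS<=PI} really outputs a dominating set of size bounded by $\Card{P}$ rather than merely by $k$, and that passing to the optimization versions loses nothing --- in particular that the harmless preprocessing step of discarding patterns compatible with no string of $G$ leaves the optimum unchanged. Both are immediate from the proof of the lemma, so once this is noted the corollary reduces to citing the inapproximability facts for \DS{}; the substantive content is already contained in Lemma~\ref{lemma:DS<=PI}.
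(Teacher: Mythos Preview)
Your proposal is correct and follows the same approach as the paper: observe that the reduction of Lemma~\ref{lemma:DS<=PI} is a strict polynomial-time reduction (preserving the optimum exactly), and then import the known inapproximability of \DS{}. The paper's proof is simply a terser version of yours, citing \cite{ChenLin2015} for the fpt-inapproximability of \DS{} and \cite{RazSafra1997} for the $c\log n$ bound, without spelling out the verification that the two claims inside Lemma~\ref{lemma:DS<=PI} yield strictness.
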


\begin{proof}
  As \DS{} has no constant factor fpt-approximation~\cite{ChenLin2015} unless $\FPT{} = \Wtwo{}$ and no $c\cdot\log n$ polynomial time approximation~\cite{RazSafra1997} for some $c > 0$ unless $\P{} = \NP{}$ and the reduction of Lemma~\ref{lemma:DS<=PI} is a strict polynomial-time reduction, the corollary follows.
\end{proof}

Given the construction in the proof of Lemma~\ref{lemma:DS<=PI}, we can deduce that one source of complexity might be the freedom (unboundedness) in the alphabet and the structure of the strings. We demonstrate that restricting these parameters is fruitless from a computational complexity perspective.

\begin{corollary}\label{cor:PI_NPc_small_strings_small_alphabet}
\PIS{} is \NP{} complete even when $\Card{\Sigma} = 2$, $d = 4$, $s = 1$ and $\Card{B} = 1$. 
\end{corollary}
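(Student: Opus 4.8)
The plan is to reuse the construction of Lemma~\ref{lemma:DS<=PI} verbatim, but starting from a restricted class of graphs on which \DS{} is already known to be \NP{}-complete, and then to check that the produced instance meets all four stipulated restrictions. Concretely, I would reduce from \DS{} restricted to graphs of maximum degree three, which remains \NP{}-complete (it is \NP{}-complete even for planar cubic graphs, see~\cite{GareyJohnson79}). The point of the degree bound is exactly that it controls the number of $1$s in each constructed string: in the reduction of Lemma~\ref{lemma:DS<=PI} the string $g_i$ has a $1$ precisely in the coordinates of the closed neighbourhood of $i$, hence at most $\deg(i)+1 \le 4$ symbols different from $0$.

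So, given $(\mathfrak{G},k)$ with $\Delta(\mathfrak{G})\le 3$, I build exactly the instance of Lemma~\ref{lemma:DS<=PI}: $\Sigma=\{0,1\}$, $G=\{g_1,\dots,g_n\}$ with $g_i[j]=1$ iff $ij\in E(\mathfrak{G})$ or $i=j$, and $B=\{0^n\}$. Then $\Card{\Sigma}=2$ and $\Card{B}=1$ hold trivially. Taking $0$ as the base symbol, $0^n$ has no non-base symbols and each $g_i$ has at most $\deg(i)+1\le 4$ of them, so the instance is $4$-small and we may set $d=4$. Finally the patterns produced in the forward direction of Lemma~\ref{lemma:DS<=PI} are of the form ``$1$ in a single coordinate, $\ast$ everywhere else'', i.e.\ each has exactly one non-$\ast$ symbol, so we may additionally impose $s=1$ on the target problem. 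Thus the construction lands inside the claimed restricted class of \PIS{}, and it is clearly polynomial-time.

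It remains to verify that the equivalence of Lemma~\ref{lemma:DS<=PI} survives the extra $s=1$ restriction. The forward implication is unchanged, since the witness patterns it builds already satisfy $s=1$. For the converse, observe that the sole forbidden string is $0^n$, and a pattern with at most one non-$\ast$ symbol fails to be compatible with $0^n$ only if it has exactly one non-$\ast$ symbol and that symbol is $1$ (an all-$\ast$ pattern, or a pattern whose only non-$\ast$ symbol is $0$, is compatible with $0^n$). Hence every pattern in an $s=1$ witness has precisely the shape used throughout the backward argument of Lemma~\ref{lemma:DS<=PI}, and that argument then extracts a dominating set of size at most $\Card{P}\le k$ exactly as written. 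Membership in \NP{} is immediate by guessing and checking a set of at most $k$ patterns, which gives \NP{}-completeness.

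The only genuinely external ingredient — and the one step to get right — is the \NP{}-completeness of \DS{} on bounded-degree graphs; everything else is bookkeeping verifying that the already-established reduction respects $\Card{\Sigma}=2$, $\Card{B}=1$, $d=4$, $s=1$ simultaneously. The conceptual content is the pairing of two observations: bounding the degree by three forces the $d=4$ bound, and having the single forbidden string be $0^n$ makes the $s=1$ restriction free, because it kills every pattern whose non-$\ast$ symbols are all $0$, leaving only single-coordinate ``$1$'' patterns, which is precisely what a dominating set encodes.
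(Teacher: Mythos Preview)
Your proposal is correct and follows exactly the paper's approach: invoke the \NP{}-completeness of \DS{} on (planar) graphs of maximum degree~$3$, apply the construction of Lemma~\ref{lemma:DS<=PI} verbatim, and observe that the closed-neighbourhood encoding then yields at most four $1$s per string with base symbol $0$. Your write-up is in fact more careful than the paper's, which leaves the $s=1$ condition implicit; your explicit check that the forward witness already has $s=1$ and that the backward direction is immediate (since restricting the solution space only shrinks the \Yes{} set) is a worthwhile addition.
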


\begin{proof}
As \DS{} is \NP{}-complete on planar graphs of maximum degree 3~\cite{GareyJohnson79}, the number of $1$s in each string in the construction of the proof of Lemma~\ref{lemma:DS<=PI} is at most 4, where we take the base symbol to be $0$.
\end{proof}

This result also demonstrates the following:

\begin{lemma}\label{lemma:PIP_PIPS_NPc}
\PIPS{} and\\ \PIP{} are both \NP{}-complete even when $\Card{\Sigma} = 2$, $d = 4$, $r = 9$ and $\Card{B} = 1$.
\end{lemma}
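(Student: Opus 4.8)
The plan is to reuse the reduction of Lemma~\ref{lemma:DS<=PI}, specialised to planar graphs of maximum degree $3$ exactly as in Corollary~\ref{cor:PI_NPc_small_strings_small_alphabet}, and to observe that on such instances one can always realise a separating pattern set in which every pattern has very few $\ast$ symbols. Concretely, let $(\mathfrak{G},k)$ be a \DS{} instance with $\mathfrak{G}$ planar of maximum degree $3$, and let $(\Sigma,G,B,k)$, with $\Sigma=\{0,1\}$, $G=\{g_{1},\dots,g_{n}\}$, $B=\{0^{n}\}$, be the corresponding \PI{} instance, where $g_{j}$ carries a $1$ precisely in the positions of the closed neighbourhood $N[j]$ of $j$. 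As already noted, $\Card{\Sigma}=2$, $\Card{B}=1$, and with $0$ as base symbol $G$ and $B$ are $4$-small (each $g_{j}$ has at most $4$ ones). Membership of \PIP{} and \PIPS{} in \NP{} is routine --- guess at most $k$ patterns of length $n$ and check compatibility with $(G,B)$ and the bound on $\ast$ symbols in polynomial time --- so the content is in verifying that the reduction is correct for the value $r=9$.

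For the forward direction I would, given a dominating set $D$ with $\Card{D}\le k$, build for each $i\in D$ the pattern $p_{i}$ that has $1$ in position $i$, $\ast$ in every position of $N^{2}[i]\setminus\{i\}$ (the closed second neighbourhood of $i$), and $0$ in every other position, and take $P=\{p_{i}: i\in D\}$. The crucial observation is that if $j\in N[i]$ then $N[j]\subseteq N^{2}[i]$: hence outside $N^{2}[i]$ the string $g_{j}$ is all zeros and agrees with $p_{i}$; inside $N^{2}[i]\setminus\{i\}$ the pattern has $\ast$; and at position $i$ both $g_{j}$ and $p_{i}$ are $1$ (since $j\in N[i]$ iff $i\in N[j]$). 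So $p_{i}\to g_{j}$, and since $D$ dominates, every $g_{j}$ is matched by some $p_{i}$ (using $p_{j}$ itself when $j\in D$), while no $p_{i}$ matches $0^{n}$ because $p_{i}[i]=1$. The count that fixes $r$: in a graph of maximum degree $3$ we have $\Card{N[i]}\le 4$ and each neighbour of $i$ contributes at most two further vertices, so $\Card{N^{2}[i]}\le 1+3+3\cdot 2=10$, whence $p_{i}$ has at most $9$ wildcards; as $G,B$ are moreover $4$-small, $(\Sigma,G,B,k,r)$ with $r=9$ is a \Yes{} instance of both \PIP{} and \PIPS{}.

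For the converse, a \Yes{} instance of \PIP{} (or \PIPS{}) is, after dropping the $\ast$-bound, a \Yes{} instance of \PI{}, so the reverse implication established in Lemma~\ref{lemma:DS<=PI} yields a dominating set of size at most $k$; together with the polynomial-time construction this gives \NP{}-completeness in the stated restricted form. I do not anticipate a real obstacle --- the one thing requiring care is the neighbourhood bookkeeping that pins the number of wildcards to $9$: the point is to specify $p_{i}$ to be $0$ outside the closed \emph{second} neighbourhood of $i$ rather than only outside $N[i]$, and to check that the crude estimate $1+3+3\cdot 2$ is exactly what is needed (any over-counting only improves the bound).
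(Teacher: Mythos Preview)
Your proposal is correct and follows essentially the same approach as the paper: both restrict the reduction of Lemma~\ref{lemma:DS<=PI} to planar cubic graphs and replace the all-$\ast$ pattern for a dominating vertex $i$ by one that is $1$ at $i$, $\ast$ on the rest of the closed second neighbourhood, and $0$ elsewhere, with the same $1+3+3\cdot 2=10$ count yielding $r=9$. Your forward direction is in fact spelled out more carefully than the paper's (you make the inclusion $N[j]\subseteq N^{2}[i]$ explicit), and your converse---simply dropping the $\ast$-bound and invoking Lemma~\ref{lemma:DS<=PI}---is exactly what the paper does.
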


\begin{proof}
Following Corollary~\ref{cor:PI_NPc_small_strings_small_alphabet}, we can see from the construction given in the proof of Lemma~\ref{lemma:DS<=PI} that for each $p \in P$, instead of setting $p[i] = \ast$ for each $i$ not in the dominating set, we can choose $r$ to be nine, and set $p[i] := 1$ if $i$ is in the dominating set, $p[j] = \ast$ for the at most three values of $j$ such that $ij \in E(\mathfrak{G})$ and the at most six additional values of $j$ at distance two\footnote{As $\mathfrak{G}$ has maximum degree three, each neighbor of $i$ has at most two other neighbors, so the patterns representing each of these neighbors has a $1$ in the $i$th position, a $1$ for its own position and two other $1$s. Therefore we need only three $\ast$ symbols for the neighbors themselves, and two more per neighbor for the distance two neighborhood.} from $i$, and $p[l] = 0$ for all other $l \in \{1,\ldots,n\}$. For the reverse argument, we have similar conditions as before, at least one symbol of each pattern must be a $1$ and at most four can be $1$s. With at most nine $\ast$ symbols, the pattern is compatible with all the strings that the corresponding vertex dominates, and all other symbols in these strings are $0$.
\end{proof}

\begin{corollary}\label{cor:PIP_PIPS_paraNP-c}
The following are true:
  \begin{enumerate}
  \item \PIPS{} is \paraNP{}-complete when parameterized by $\Card{\Sigma} + d + r + \Card{B}$.
  \item \PIP{} is \paraNP{}-complete when parameterized by $\Card{\Sigma} + r + \Card{B}$.
  \item \PIps{} is \paraNP{}-complete when parameterized by $\Card{\Sigma} + d + s + \Card{B}$.
  \item \PIp{} is \paraNP{}-complete when parameterized by $\Card{\Sigma} + s + \Card{B}$.
  \end{enumerate}
\end{corollary}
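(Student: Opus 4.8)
The plan is to read all four statements straight off the corollary of Flum and Grohe recalled in Section~\ref{sec:prelims}: a parameterized problem is \paraNP{}-complete as soon as its classical version is \NP{}-complete for at least one fixed value of the parameter. Each of \PIPS{}, \PIP{}, \PIps{} and \PIp{} is trivially in \NP{} --- a candidate pattern set of size at most $k$ is verified in polynomial time by checking compatibility position by position against $G$ and $B$ --- hence in \paraNP{} under every parameterization. So the whole job reduces to exhibiting, for each combined parameter, one value at which the underlying classical problem stays \NP{}-hard; the previously established \NP{}-completeness results supply exactly these.

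For item~(1), Lemma~\ref{lemma:PIP_PIPS_NPc} already states that \PIPS{} is \NP{}-complete with $\Card{\Sigma} = 2$, $d = 4$, $r = 9$ and $\Card{B} = 1$, so the parameter $\Card{\Sigma} + d + r + \Card{B}$ is pinned to $16$, and \paraNP{}-completeness follows by the Flum--Grohe corollary. For item~(2) the same lemma yields \NP{}-completeness of \PIP{} with $\Card{\Sigma} = 2$, $r = 9$ and $\Card{B} = 1$ (the bound $d = 4$ is merely a feature of the produced instances and is not part of the \PIP{} parameter), fixing $\Card{\Sigma} + r + \Card{B}$ to $12$.

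For items~(3) and~(4) I would invoke Corollary~\ref{cor:PI_NPc_small_strings_small_alphabet}, observing that the instances it builds are actually \PIps{} instances: in the reduction of Lemma~\ref{lemma:DS<=PI}, restricted to planar graphs of maximum degree three, there is always an optimal pattern set in which every pattern carries a single non-$\ast$ symbol, so the bound $s = 1$ may be imposed without altering \Yes{}/\No{}-membership. Hence \PIps{} is \NP{}-complete with $\Card{\Sigma} = 2$, $d = 4$, $s = 1$ and $\Card{B} = 1$, which fixes $\Card{\Sigma} + d + s + \Card{B}$ to $8$; forgetting the $d$-smallness requirement (which a \PIp{} instance need not satisfy) leaves an \NP{}-complete family of \PIp{} instances with $\Card{\Sigma} = 2$, $s = 1$ and $\Card{B} = 1$, fixing $\Card{\Sigma} + s + \Card{B}$ to $4$. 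Applying the Flum--Grohe corollary to each family finishes the proof.

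The only point requiring care --- and the closest thing here to an obstacle --- is the bookkeeping: one must check that the cited results genuinely clamp \emph{all} of the parameters in each combined parameter to constants simultaneously, and in particular that the construction behind Corollary~\ref{cor:PI_NPc_small_strings_small_alphabet} can be read at once as a bounded-$d$, bounded-$s$, binary-alphabet, single-forbidden-string instance of \PIps{} (and, after dropping $d$, of \PIp{}). Once this is verified, each part is immediate.
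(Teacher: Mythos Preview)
Your proposal is correct and follows essentially the same route as the paper: invoke Lemma~\ref{lemma:PIP_PIPS_NPc} for items~(1) and~(2), Corollary~\ref{cor:PI_NPc_small_strings_small_alphabet} for items~(3) and~(4), and apply the Flum--Grohe \paraNP{} criterion in each case. Your additional bookkeeping (explicit parameter values, the \NP{} containment remark) is sound but unnecessary, since Corollary~\ref{cor:PI_NPc_small_strings_small_alphabet} already records $s=1$ explicitly.
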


\begin{proof}
The result are obtained as follows:

\begin{enumerate}
\item Lemma~\ref{lemma:PIP_PIPS_NPc} gives \NP{}-completeness with fixed $\Card{\Sigma}$, $d$, $r$ and $\Card{B}$. With Corollary 2.16 from~\cite{FlumGrohe2006}, the result follows.
\item The preservation of hardness when taking subsets of a set of parameters gives the result from $1$.
\item Corollary~\ref{cor:PI_NPc_small_strings_small_alphabet} shows \NP{}-completeness with fixed $\Card{\Sigma}$, $d$, $s$ and $\Card{B}$. Corollary 2.16 from~\cite{FlumGrohe2006} completes the result.
\item The result follows immediately from $3$.
\end{enumerate}
\end{proof}

\sloppypar We note that \DS{} is in \FPT{} for graphs of bounded degree, so we do not obtain a \Wtwo{}-hardness result. However we can tighten this result a little further:

\begin{theorem}\label{thm:PI_NP-c_small_strings}
  \PI{} is \NP{}-complete and \APX{}-hard even when $\Sigma = \{0,1\}$ and all strings have at most two symbols as $1$ (equiv. at most two symbols as $0$) and $\Card{B} = 1$.
\end{theorem}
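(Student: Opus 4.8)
The plan is to reduce from \VC{}, which (as recalled in Section~\ref{sec:prelims}) is \NP{}-complete and \APX{}-hard, via a reduction that preserves the optimum value exactly and is therefore strict. Given an instance $(\mathfrak{G},k)$ of \VC{} with $V(\mathfrak{G}) = \{1,\ldots,n\}$, I would set $\Sigma = \{0,1\}$, $B = \{0^{n}\}$, and let $G = \{g_{e} \mid e \in E(\mathfrak{G})\}$, where for an edge $e = uv$ the string $g_{e} \in \Sigma^{n}$ has $g_{e}[u] = g_{e}[v] = 1$ and $g_{e}[i] = 0$ for every other $i$. Each string of $G$ then has exactly two symbols equal to $1$ and the single string of $B$ has none, so the instance meets the stated restrictions, and the construction is plainly polynomial time. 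For the forward direction, given a vertex cover $D$ with $\Card{D} \le k$, take $P = \{p_{v} \mid v \in D\}$, where $p_{v}$ has a $1$ in position $v$ and $\ast$ elsewhere; no $p_{v}$ is compatible with $0^{n}$, and every edge $uv$ has an endpoint $v \in D$ with $g_{uv}[v] = 1 = p_{v}[v]$, so $p_{v} \to g_{uv}$, giving $P \to (G,B)$ and $\Card{P} \le k$.

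The backward direction is where the work lies. Starting from $P$ with $\Card{P} \le k$ and $P \to (G,B)$, I would normalise $P$ without increasing its size, in the style of the argument in Lemma~\ref{lemma:DS<=PI}: (i) discard any pattern compatible with no string of $G$, which cannot break $P \to G$; (ii) every surviving $p$ has at least one position equal to $1$, since otherwise $p \to 0^{n} \in B$; (iii) replacing each $0$ in a pattern by $\ast$ can only enlarge its compatibility set within $G$ and, since a $1$ remains, keeps it incompatible with $0^{n}$, so we may assume no pattern contains a $0$; (iv) a pattern with three or more $1$s is compatible with no $g_{e}$ (each $g_e$ has only two $1$s) and was already removed in (i), whereas a pattern with exactly two $1$s in positions $u$ and $v$ is compatible only with $g_{uv}$ and may be replaced by $p_{u}$, whose compatibility set within $G$ is a superset and which is still incompatible with $0^{n}$. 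After these steps every $p \in P$ has exactly one $1$, say in position $v_{p}$; set $D = \{v_{p} \mid p \in P\}$, so $\Card{D} \le \Card{P} \le k$. For each $g_{e}$ with $e = xy$ there is a $p \in P$ with $p \to g_{e}$, and since the only non-$\ast$ position of $p$ is $v_{p}$ with $g_{e}[v_{p}] = 1$, we get $v_{p} \in \{x,y\}$; hence $D$ meets every edge and is a vertex cover.

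Therefore $(\mathfrak{G},k)$ is a \Yes{} instance of \VC{} iff $(\Sigma,G,B,k)$ is a \Yes{} instance of \PI{}. Membership of \PI{} in \NP{} is immediate (guess at most $k$ patterns of length $n$ and verify compatibility in polynomial time), so \PI{} is \NP{}-complete under the stated restrictions; moreover the reduction sends an optimal vertex cover to an equally large optimal pattern set and conversely, so it is a strict polynomial-time reduction, and the \APX{}-hardness of \VC{} transfers to \PI{} with optimization goal $\min k$. Finally, applying the bitwise complement (exchanging $0$ and $1$ in every string and every pattern while leaving $\ast$ fixed), an operation that preserves compatibility, yields the equivalent statement in which every string has at most two symbols equal to $0$.

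The main obstacle is the normalisation in the backward direction: one must check that each simplification step of $P$ simultaneously preserves $P \to G$, preserves incompatibility with the single string of $B$, and does not increase $\Card{P}$. Everything else — the forward construction, membership in \NP{}, and the strictness bookkeeping for \APX{}-hardness — is routine.
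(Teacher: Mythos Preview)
Your proposal is correct and follows essentially the same route as the paper: the identical reduction from \VC{} (edge strings with two $1$s, $B=\{0^n\}$, cover vertices mapped to single-$1$ patterns), the same backward argument that every useful pattern has one or two $1$s from which a cover vertex is extracted, and the same appeal to strictness for \APX{}-hardness. Your normalisation steps (iii) and (iv) are slightly more explicit than the paper's partition into $P_1$ and $P_2$, and your bit-complement remark for the ``at most two $0$s'' variant is a detail the paper leaves implicit, but none of this constitutes a different approach.
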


\begin{lemma}\label{lem:VC<=PI}
  $\VC{} \leq_{P} \PI{}$.
\end{lemma}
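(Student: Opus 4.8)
The plan is to exhibit a direct, optimum-preserving polynomial-time reduction from \VC{} to \PI{} in which every constructed string contains at most two occurrences of the symbol $1$ and the forbidden set $B$ is a singleton. Lemma~\ref{lem:VC<=PI} then follows, and because the reduction will turn out to be strict in the sense of~\cite{Crescenzi1997}, Theorem~\ref{thm:PI_NP-c_small_strings} drops out immediately from the \NP{}-completeness and \APX{}-hardness of \VC{}. Concretely, given an instance $(\mathfrak{G},k)$ of \VC{} with $V(\mathfrak{G}) = \{1,\ldots,n\}$, I would set $\Sigma = \{0,1\}$, $B = \{0^{n}\}$, and, for every edge $ij \in E(\mathfrak{G})$, introduce a string $g_{ij} \in \Sigma^{n}$ having $1$ exactly in positions $i$ and $j$ and $0$ elsewhere; let $G = \{g_{ij} \mid ij \in E(\mathfrak{G})\}$. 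This is plainly polynomial-time computable, uses a two-letter alphabet, has $\Card{B} = 1$, and every string has at most two $1$s (exactly two for the edge strings, none for $0^{n}$), independently of the degree of $\mathfrak{G}$.

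Both directions hinge on one structural observation: if a pattern $p$ satisfies $p \not\to 0^{n}$ then $p[\ell] = 1$ for some position $\ell$; and whenever $p[\ell] = 1$, any $g_{ij}$ with $p \to g_{ij}$ must satisfy $\ell \in \{i,j\}$, since $g_{ij}[\ell]$ is forced to equal $1$. Hence any pattern that is compatible with a string of $G$ but not with $0^{n}$ singles out at least one vertex that is an endpoint of every edge-string it matches.

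For correctness, in the forward direction I would turn a vertex cover $C$ with $\Card{C} \le k$ into $P = \{p_{v} \mid v \in C\}$, where $p_{v}$ has $1$ in position $v$ and $\ast$ elsewhere: each $p_{v} \not\to 0^{n}$, and for every edge $ij$ some endpoint, say $i$, lies in $C$, so $p_{i} \to g_{ij}$; thus $P \to (G,B)$ and $\Card{P} \le k$. Conversely, given a witness $P$ with $\Card{P} \le k$, discard any pattern compatible with no string of $G$; each remaining $p$ has $p \not\to 0^{n}$, so pick $\ell_{p}$ with $p[\ell_{p}] = 1$ and set $C = \{\ell_{p} \mid p \in P\}$. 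For any edge $ij$ the string $g_{ij}$ is matched by some $p \in P$, and the observation forces $\ell_{p} \in \{i,j\}$, so $C$ covers $ij$; hence $C$ is a vertex cover with $\Card{C} \le \Card{P} \le k$. Since each direction maps solutions to solutions of no larger size, a minimum vertex cover of $\mathfrak{G}$ and a minimum solution of the constructed \PI{} instance have the same cardinality, which is exactly what makes the reduction strict. The ``at most two $0$s'' variant is obtained by replaying the identical argument on bitwise complements (take $B = \{1^{n}\}$ and put $0$ in positions $i,j$ of each edge-string), which is sound because $p \to g \iff \bar{p} \to \bar{g}$.

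I do not expect a genuine obstacle here; the only point needing care is the reverse direction, where one must rule out that a pattern carrying several $1$s might cover edges more economically than the single-vertex patterns used above — the structural observation settles this, since any one $1$-position of such a pattern already certifies a covering vertex, so no saving is possible. The other thing to state precisely is \emph{why} the reduction is optimum-preserving (hence strict), as that, rather than mere parameter preservation, is what transports the \APX{}-hardness of \VC{} to \PI{}.
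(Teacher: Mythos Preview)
Your proposal is correct and follows essentially the same construction and argument as the paper: the same binary alphabet, edge-indicator strings in $G$, singleton $B=\{0^n\}$, single-$1$ patterns from a cover in the forward direction, and extraction of a covering vertex from a $1$-position of each pattern in the reverse direction. Your reverse direction is in fact slightly cleaner than the paper's --- you pick an arbitrary $1$-position of each pattern rather than first bounding the number of $1$s and splitting into $P_1,P_2$ --- but the underlying observation is identical, and your additional remarks on strictness and the bitwise-complement variant are exactly what the paper uses (implicitly) to derive Theorem~\ref{thm:PI_NP-c_small_strings}.
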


\begin{figure}[htb]
  \centering
  \subfloat{\raisebox{-0.5\height}{\begin{tikzpicture}[shorten >=1pt,node distance=2cm,auto] 
    \node[draw,circle,fill=red!40] (q_0)                      {$1$}; 
    \node[draw,circle,fill=red!40] (q_1) [above right of=q_0] {$2$}; 
    \node[draw,circle] (q_2) [above left of=q_0]  {$3$};
    \node[draw,circle,fill=red!40] (q_3) [below right of=q_0] {$4$};
    \node[draw,circle] (q_4) [below left of=q_0]  {$5$};

    \path
    (q_0) edge (q_1) edge (q_2) edge (q_3) edge (q_4)
    (q_1) edge (q_2)
    (q_3) edge (q_4);
  \end{tikzpicture}}}\hspace{48pt}$\rightarrow$\hspace{48pt}
  \subfloat{
    \begin{tabular}{c|ccccc}
      \multirow{6}{*}{$G$} & 1 & 1 & 0 & 0 & 0\\
      &1&0&1&0&0\\
      &1&0&0&1&0\\
      &1&0&0&0&1\\
      &0&1&1&0&0\\
      &0&0&0&1&1\\
      \midrule
      $B$ & 0 & 0 & 0 & 0 & 0
    \end{tabular}
  }\\
  \subfloat{
    $P = \{1\ast\ast\ast\ast,\;\;\; \ast 1\ast\ast\ast,\;\;\; \ast\ast\ast 1\ast\}$
  }
  \caption{An example of the reduction used in Lemma~\ref{lem:VC<=PI} with $k=3$. The vertex cover is highlighted in red, and the correspond set of patterns is shown.}
  \label{fig:VC_red_example}
\end{figure}
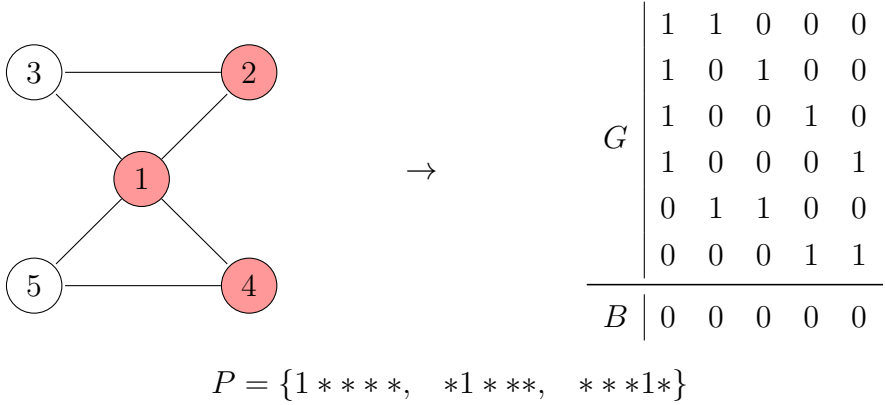

\begin{proof}
Given an instance $(\mathfrak{G},k)$ of \VC{} with $V(\mathfrak{G}) = \{1,\ldots, n\}$, we construct an instance $(\Sigma, G, B, k)$ of \PI{} as follows:
\begin{enumerate}
\item $\Sigma = \{0,1\}$.
\item $G = \{g_{ij} \mid ij \in E(\mathfrak{G})\}$ with $g_{ij} \in \Sigma^{n}$ where $g_{ij}[i] = g_{ij}[j] = 1$ and $g_{ij}[u] = 0$ for $u \neq i,j$.
\item $B = \{0^{n}\}$.
\end{enumerate}

Clearly this construction can be performed in polynomial time. The construction is illustrated in Figure~\ref{fig:VC_red_example}.

\begin{claim}
  If $(\mathfrak{G},k)$ is a \Yes{} instance of \VC{} then $(\Sigma,G,B,k)$ is a \Yes{} instance of \PI{}.
\end{claim}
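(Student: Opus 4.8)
The plan is to reuse the ``single-coordinate pattern per chosen vertex'' idea from Lemma~\ref{lemma:DS<=PI}. First I would fix a vertex cover $C \subseteq V(\mathfrak{G})$ with $\Card{C} \leq k$ and, for each $i \in C$, define the pattern $p_{i}$ of length $n$ with $p_{i}[i] = 1$ and $p_{i}[j] = \ast$ for all $j \neq i$; then set $P = \{p_{i} : i \in C\}$, which immediately gives $\Card{P} \leq \Card{C} \leq k$.

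Next I would verify the two requirements of $P \to (G,B)$ from Definition~\ref{def:separated_sets}. For compatibility with $G$: every $g \in G$ is some $g_{ij}$ with $ij \in E(\mathfrak{G})$, and since $C$ covers this edge we may assume $i \in C$; as $p_{i}$ is $\ast$ in every coordinate except $i$ and $p_{i}[i] = 1 = g_{ij}[i]$ by construction, we get $p_{i} \to g_{ij}$, so every string of $G$ is matched by some pattern of $P$. For incompatibility with $B = \{0^{n}\}$: each $p_{i}$ has a $1$ in coordinate $i$ whereas $0^{n}$ has a $0$ there, so $p_{i} \not\to 0^{n}$ for every $i \in C$. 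Hence $P$ $G$-$B$-separates $(G,B)$ and $(\Sigma,G,B,k)$ is a \Yes{} instance.

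There is essentially no obstacle; the only point requiring care is the use of the covering property: it is exactly what guarantees that, for each edge-string $g_{ij}$, at least one of the two patterns $p_{i}, p_{j}$ associated with its endpoints actually appears in $P$, and that single present pattern already matches the whole string because $g_{ij}$ has exactly two non-zero coordinates, namely $i$ and $j$. (The converse direction, needed to complete Lemma~\ref{lem:VC<=PI}, would run symmetrically, reading off a cover from the non-$\ast$ coordinates of an optimal pattern set after the usual reduction to patterns each matching at least one string of $G$.)
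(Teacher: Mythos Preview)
Your proof is correct and essentially identical to the paper's own argument: both take a vertex cover, build one single-$1$ pattern per chosen vertex, and verify compatibility with each edge-string via the covering property and incompatibility with $0^{n}$ via the forced $1$. The only differences are cosmetic (you write $C$ where the paper writes $V'$), and your closing parenthetical on the converse is extraneous to this particular claim but harmless.
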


Let $V' \subseteq V(\mathfrak{G})$ where $\Card{V'} \leq k$ be a vertex cover witnessing that $(\mathfrak{G},k)$ is a \Yes{} instance of \VC{}. We construct a set of patterns $P$ with $\Card{P} = \Card{V'}$ that is a solution for $(\Sigma, G, B, k)$ where for each $i \in V'$ there is a pattern $p_{i} \in P$ with $p_{i}[i] = 1$ and $p_{i}[j] = \ast$ for $j \neq i$. For each edge $ij \in E(\mathfrak{G})$, either $i \in V'$ or $j \in V'$ (or both). Therefore for the string $g_{ij}$ corresponding to $ij$, we have either $p_{i} \in P$ or $p_{j} \in P$ such that $p_{i} \to g_{ij}$ or $p_{j} \to g_{ij}$. Hence $P \to G$. Moreover there is no $p_{i} \in P$ such that $p_{i} \to b$ where $b$ is the single element of $B$ as each $p_{i}$, by construction, contains a $1$, whereas $b$ consists of only $0$s. Therefore $(\Sigma, G, B, k)$ is a \Yes{} instance of \PI{}.

\begin{claim}
  If $(\Sigma,G,B,k)$ is a \Yes{} instance of \PI{} then $(\mathfrak{G},k)$ is a \Yes{} instance of \VC{}.
\end{claim}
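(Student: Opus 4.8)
The plan is to start from a witness $P$ with $\Card{P} \le k$ and $P \to (G,B)$ and read off from it a vertex cover of $\mathfrak{G}$ of size at most $k$. The structural feature doing all the work is the shape of the gadget strings: each $g_{ij}$ has a $1$ in exactly the two positions $i$ and $j$ and a $0$ everywhere else, while the unique forbidden string is $0^{n}$; together these will force every pattern used in a solution to ``point at'' an endpoint of each edge whose string it matches.

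First I would observe that every $p \in P$ contains at least one symbol equal to $1$: a pattern over $\{0,1,\ast\}$ that uses no $1$ uses only $0$ and $\ast$ and is therefore compatible with $0^{n}$, contradicting $P \to (G,B)$. (This applies to \emph{every} pattern in $P$, so, unlike in the proof of Lemma~\ref{lemma:DS<=PI}, there is no need to first discard patterns matching nothing in $G$.) Hence for each $p \in P$ we may fix a position $i_{p}$ with $p[i_{p}] = 1$ and set $V' = \{ i_{p} \mid p \in P \}$, so that $\Card{V'} \le \Card{P} \le k$. To see that $V'$ is a vertex cover, take any edge $ij \in E(\mathfrak{G})$: since $g_{ij} \in G$ and $P \to G$, some $p \in P$ satisfies $p \to g_{ij}$, and then at every position $l$ with $p[l] = 1$ compatibility forces $g_{ij}[l] = 1$, which by construction means $l \in \{i,j\}$. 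In particular $i_{p} \in \{i,j\}$, so the edge $ij$ is covered. As $ij$ was arbitrary, $V'$ is a vertex cover of size at most $k$ and $(\mathfrak{G},k)$ is a \Yes{} instance, completing the claim and hence Lemma~\ref{lem:VC<=PI}.

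There is no real obstacle in this direction; the only point needing a little care is to state the ``every pattern has a $1$'' observation for all of $P$, so that $i_{p}$ is always well defined. Finally, to obtain Theorem~\ref{thm:PI_NP-c_small_strings} I would note that the reduction maps a vertex cover of size $t$ to a set of exactly $t$ patterns and, by the argument above, conversely; hence the optimal values coincide and the reduction is strict, so the \APX{}-hardness of \VC{} transfers, while membership of \PI{} in \NP{} is the routine guess-and-verify. The restriction in the theorem statement is immediate, since by construction $\Sigma = \{0,1\}$, $\Card{B} = 1$, and every string has exactly two (hence at most two) $1$s, with the ``at most two $0$s'' variant obtained by complementing.
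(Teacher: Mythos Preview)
Your proof is correct and follows the same core idea as the paper: every pattern must contain a $1$ (else it matches $0^{n}$), and any $1$-position of a pattern compatible with $g_{ij}$ must lie in $\{i,j\}$, so picking one $1$-position per pattern yields a vertex cover. Your version is in fact slightly cleaner than the paper's: the paper first prunes patterns that match nothing in $G$, then argues that each remaining pattern has at most two $1$s, partitions $P$ into $P_{1}$ (one $1$) and $P_{2}$ (two $1$s), and handles the two cases separately. As you observe, none of that is needed---since the ``at least one $1$'' property holds for every pattern regardless of whether it matches anything in $G$, you can pick $i_{p}$ directly, and the bound on the number of $1$s plays no role once you only extract a single position per pattern.
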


Let $P$ with $\Card{P} \leq k$ be the set of patterns witnessing the fact that $(\Sigma,G,B,k)$ is a \Yes{} instance of \PI{}. We may assume without loss of generality that for every $p \in P$, there exists some $g \in G$ such that $p \to g$. Each $p \in P$ must contain at least one $1$, otherwise $p \to b$ where $b$ is the single element of $B$. No $p \in P$ can contain more than two $1$s, as there exists $g \in G$ such that $p \to g$, and every such $g$ has exactly two $1$s. We note that if a pattern $p$ has two $1$s, then there is exactly one $g \in G$ such that $p \to g$.

Let $P_{1} \subseteq P$ be the set of patterns with exactly one $1$ and $P_{2} \subseteq P$ be the set of patterns with exactly two $1$s. We have $P_{1} \cup P_{2} = P$. We construct a vertex cover $V' \subseteq V(\mathfrak{G})$ with $\Card{V'} \leq \Card{P}$ as follows:
\begin{enumerate}
\item for each $p \in P_{1}$ add $i$ to $V'$ where $p[i] = 1$,
\item for each $p \in P_{2}$ where $p[i] = p[j] = 1$, arbitrarily add one of $i$ or $j$ to $V'$.
\end{enumerate}
Consider every edge $ij \in E(\mathfrak{G})$, then for the corresponding $g_{ij} \in G$ there exists a $p \in P$ such that $p \to g_{ij}$. As each $p$ has at least one $1$, this $1$ must be at position $i$ or $j$ (or both). Therefore $i$ or $j$ is in $V'$ (or perhaps both), therefore $V'$ forms a valid vertex cover for $\mathfrak{G}$.
\end{proof}

\begin{proof}[Proof of Theorem~\ref{thm:PI_NP-c_small_strings}]
\sloppypar The \NP{}-hardness follows from Lemma~\ref{lem:VC<=PI}. The containment in \NP{} follows from the usual verification algorithm. The \APX{}-hardness follows as the reduction of Lemma~\ref{lem:VC<=PI} is strict and \VC{} is \APX{}-hard~\cite{Hastad1997}.
\end{proof}



Finally, as restricting the alphabet did not reduce the complexity, we consider the case where the strings themselves are short. Again the problem is hard, but we note that to achieve this reduction we relax the bound on $\Sigma$ (or in Parameterized Complexity terms, $\Card{\Sigma}$ is no longer a parameter -- if $\Card{\Sigma}$ is a parameter, the problem is in \FPT{}).

\begin{theorem}\label{thm:PIS_NP-c_short_strings}
\sloppypar \PIS{} is \NP{}-complete even when $n = 4$, $d = 4$ and $\Card{B} = 1$.  
\end{theorem}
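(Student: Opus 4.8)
The plan is to reduce from \VC{}, which is \NP{}-complete. The conceptual obstacle is that a pattern over length-$4$ strings is very weak: exactly as in the proofs of Lemmas~\ref{lemma:DS<=PI} and~\ref{lem:VC<=PI}, we may assume every pattern in a solution fixes a single position to a non-$\ast$ value disagreeing with the unique forbidden string, since enlarging a pattern's set of non-$\ast$ positions only shrinks (never grows) its match set and never helps exclude a string. So if we write the single bad string as $b = \beta\beta\beta\beta$ for a fresh symbol $\beta$, the realizable maximal match sets are the classes $G_{i,v} = \{g \in G : g[i] = v\}$ for $i \in \{1,2,3,4\}$ and $v \neq \beta$, and for each fixed $i$ these are pairwise disjoint. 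Thus \PIS{} with $n=4$ and $\Card{B}=1$ is exactly \SetCover{} on $G$ where the available sets are the blocks of (at most) four partitions of $G$. To encode an arbitrary graph this way its edges must be realized by pairs of blocks from \emph{different} partitions, i.e.\ one needs a proper $4$-colouring of the graph with colours indexing the positions; manufacturing this structure for free is the heart of the argument.

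Stage one. Given a \VC{} instance $(\mathfrak{G},k)$, let $\mathfrak{G}'$ be obtained by subdividing every edge of $\mathfrak{G}$ twice, i.e.\ replacing each edge $uw$ by a path $u - z_{1} - z_{2} - w$ through two new internal vertices. A short exchange argument (the three edges of each such path need at least two cover vertices among $\{u,z_{1},z_{2},w\}$, and $\{u,z_{2}\}$ or $\{z_{1},w\}$ suffices) shows that subdividing an edge twice increases the minimum vertex cover size by exactly one, so $\tau(\mathfrak{G}') = \tau(\mathfrak{G}) + \Card{E(\mathfrak{G})}$, where $\tau$ denotes minimum vertex cover size. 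Moreover $\mathfrak{G}'$ carries the obvious proper $3$-colouring $c$: colour $1$ on every original vertex, colour $2$ on every first internal vertex $z_{1}$, colour $3$ on every second internal vertex $z_{2}$; every edge of $\mathfrak{G}'$ joins colours $\{1,2\}$, $\{2,3\}$ or $\{3,1\}$.

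Stage two. Encode $(\mathfrak{G}',c)$ as a \PIS{} instance. Let $\Sigma = \{\sigma,\beta\} \cup \{\langle v\rangle : v \in V(\mathfrak{G}')\}$ with $\sigma$ the base symbol and all listed symbols distinct; set $B = \{b\}$ with $b = \beta\beta\beta\beta$; and for each edge $e = xy$ of $\mathfrak{G}'$ with $c(x)=i$, $c(y)=j$ (so $i \neq j$, $i,j \in \{1,2,3\}$) put into $G$ the string $g_{e}$ with $g_{e}[i]=\langle x\rangle$, $g_{e}[j]=\langle y\rangle$, and $g_{e}[\ell]=\beta$ for the remaining two positions. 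Set the pattern budget to $k' = k + \Card{E(\mathfrak{G})}$. Every string has length $n=4$, disagrees with $\sigma$ in all four positions, so $G \cup B$ is $4$-small, we have $\Card{B}=1$, and $b \notin G$ since $b$ has no $\langle v\rangle$-symbol; the whole construction is polynomial.

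For correctness, by the observation above we may assume a solution $P$ consists of patterns $p_{i,v}$ fixing position $i$ to a value $v \neq \beta$ (otherwise that pattern matches $b$), with match set $G_{i,v}$. Here $G_{i,v}$ is empty unless $v = \langle x\rangle$ for the vertex $x$ with $c(x)=i$, in which case $G_{i,\langle x\rangle} = \{g_{e} : e \ni x\}$ is precisely the star of $x$ in $\mathfrak{G}'$. Hence a valid $P$ with $\Card{P}\le k'$ exists if and only if the chosen vertices form a vertex cover of $\mathfrak{G}'$ of size at most $k'$, i.e.\ if and only if $\tau(\mathfrak{G}') \le k + \Card{E(\mathfrak{G})}$, i.e.\ if and only if $(\mathfrak{G},k)$ is a \Yes{} instance of \VC{} (conversely, a vertex cover $X$ of $\mathfrak{G}'$ gives $P = \{p_{c(x),\langle x\rangle} : x \in X\}$, which covers $G$ and, each pattern having $\langle x\rangle \neq \beta$ in position $c(x)$, excludes $b$). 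Membership in \NP{} is immediate by guessing and verifying $P$, completing the proof. The main obstacle is precisely the combinatorial collapse of length-$4$ \PIS{} to covering by blocks of few partitions; the double-subdivision step is what simultaneously tripartises the graph and pins down the resulting cover number, and is the only delicate point.
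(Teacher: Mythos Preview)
Your argument is correct and proves the theorem, but by a genuinely different route than the paper. The paper reduces from \textsc{Planar Vertex Cover}: given a planar graph it computes a proper $4$-colouring $C$ in polynomial time via the Four Colour Theorem, and encodes each edge $ij$ by the length-$4$ string carrying $\sigma_{i}$ at position $C(i)$, $\sigma_{j}$ at position $C(j)$, and a filler symbol $\sigma_{n+1}$ elsewhere, with $B=\{\sigma_{n+1}^{4}\}$. The correctness argument is then exactly your ``each useful pattern fixes one coordinate to a vertex symbol'' observation, with the parameter $k$ preserved verbatim.

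You instead start from general \VC{}, manufacture the required colouring by doubly subdividing every edge (yielding an explicit proper $3$-colouring), and track the cover number through the exchange identity $\tau(\mathfrak{G}')=\tau(\mathfrak{G})+\Card{E(\mathfrak{G})}$. This is more elementary in that it avoids the Four Colour Theorem entirely; and since you only use three colour classes, dropping the always-$\beta$ fourth coordinate would in fact sharpen the statement to $n=3$, which the paper's approach cannot reach. The price is that your reduction sends $k$ to $k+\Card{E(\mathfrak{G})}$; this is harmless for \NP{}-completeness (which is all the theorem claims) but would not serve as a parameterized or strict approximation-preserving reduction, whereas the paper's reduction does preserve $k$. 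Your introduction of a fresh base symbol $\sigma$ that appears in no string is also a nice touch, as it makes every string have \emph{exactly} four non-base positions, matching the ``exactly $d$'' wording of Definition~\ref{def:small_string} cleanly.
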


\begin{lemma}\label{lem:VC<=PIS_short_strings}
\sloppypar $\PVC{} \leq_{P} \PIS{}$ even when the length of strings is restricted to $4$. 
\end{lemma}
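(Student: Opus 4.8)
The plan is to reduce \PVC{} to \PIS{} by exploiting the Four Colour Theorem to pack each edge into a string of length exactly $4$. The guiding observation is that a proper $4$-colouring assigns the two endpoints of every edge \emph{distinct} colours, so we can identify the four colour classes with the four positions of our strings and record an endpoint of an edge at the position indexed by its colour without ever colliding with the other endpoint.

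Concretely, given an instance $(\mathfrak{G},k)$ of \PVC{} with $V(\mathfrak{G}) = \{1,\ldots,m\}$, I would first compute, in polynomial time, a proper $4$-colouring $c : V(\mathfrak{G}) \to \{1,2,3,4\}$ (this is where the algorithmic Four Colour Theorem is invoked). Then I build a \PIS{} instance $(\Sigma,G,B,k)$ with $\Sigma = \{0,1,\ldots,m\}$, base symbol $0$, and strings of length $4$: for each edge $ij \in E(\mathfrak{G})$ set $g_{ij}[c(i)] = i$, $g_{ij}[c(j)] = j$ (well defined since $c(i)\neq c(j)$), and $g_{ij}[l] = 0$ for the two remaining positions; put $G = \{g_{ij} \mid ij \in E(\mathfrak{G})\}$ and $B = \{0000\}$. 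Each string differs from the base symbol in at most two positions (so certainly at most four), $\Card{B}=1$, and the construction is plainly polynomial.

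For the forward direction, from a vertex cover $V'$ with $\Card{V'}\le k$ I would take $P = \{p_{v}\mid v\in V'\}$ with $p_{v}[c(v)] = v$ and $p_{v}[l] = \ast$ otherwise; a one-line check shows $p_{v}\to g_{ij}$ iff $v\in\{i,j\}$, so $P\to G$ because $V'$ covers every edge, while $p_{v}\not\to 0000$ since $p_{v}$ carries the nonzero symbol $v$. For the converse, take $P$ with $\Card{P}\le k$ and $P\to(G,B)$, and discard any pattern compatible with no string of $G$. Each surviving $p$ must contain a symbol that is neither $\ast$ nor $0$ (else $p\to 0000$); and if $p\to g_{ij}$, then any such symbol $v$ of $p$, occurring at some position $a$, forces $g_{ij}[a] = v \neq 0$, hence $a = c(i)$ with $v=i$, or $a = c(j)$ with $v=j$. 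So every retained pattern $p$ pins down a unique vertex $v_{p}$, and whenever $p\to g_{ij}$ we get $v_{p}\in\{i,j\}$; therefore $V' = \{v_{p}\mid p\in P\}$ is a vertex cover of $\mathfrak{G}$ of size at most $\Card{P}\le k$.

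Since \PVC{} is \NP{}-complete and membership of \PIS{} in \NP{} is immediate from the obvious verification procedure, the lemma yields Theorem~\ref{thm:PIS_NP-c_short_strings}. I expect the main obstacle to be the converse direction, specifically the argument that a pattern in a claimed solution cannot ``cheat'' by encoding vertex information about more than one endpoint of the edge it matches; this is exactly the case analysis on the non-$\ast$, non-$0$ symbols above, and it is routine once the compatibility condition is unwound. The only non-elementary ingredient is the appeal to the (algorithmic) Four Colour Theorem to obtain the colouring in polynomial time — without it one could not guarantee that the two endpoints of an edge land in different positions of a length-$4$ string.
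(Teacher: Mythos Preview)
Your proposal is correct and follows essentially the same route as the paper: both constructions compute a proper $4$-colouring of the planar graph, encode each edge $ij$ as a length-$4$ string that carries the two endpoint labels in the positions given by their colours and the base symbol elsewhere, and set $B$ to the all-base string. The forward and converse arguments also match; the paper is just slightly more explicit in the converse, partitioning the surviving patterns into those with one versus two non-base symbols before extracting a cover vertex from each, whereas you collapse this into a single choice of $v_{p}$ (which is fine, since any non-$\ast$, non-$0$ symbol of a surviving pattern works).
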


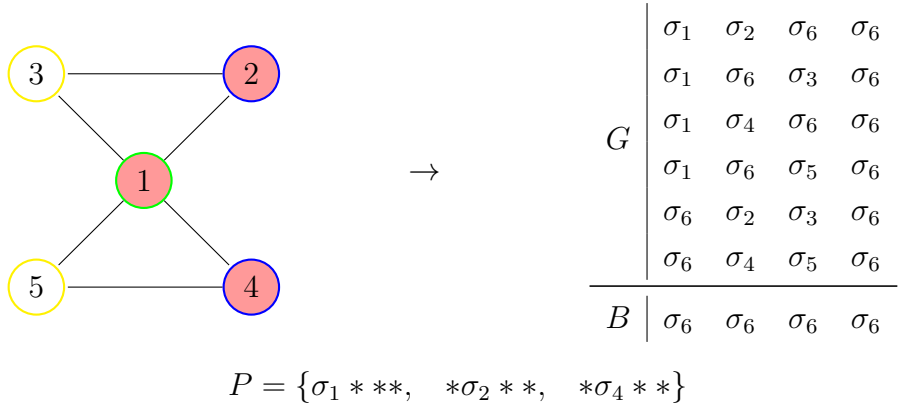
\begin{figure}[htb]
  \centering
  \subfloat{\raisebox{-0.5\height}{\begin{tikzpicture}[shorten >=1pt,node distance=2cm,auto] 
    \node[draw,green,circle,fill=red!40,text=black,thick] (q_0)                      {$1$}; 
    \node[draw,blue,circle,fill=red!40,text=black,thick] (q_1) [above right of=q_0] {$2$}; 
    \node[draw,yellow,circle,text=black,thick] (q_2) [above left of=q_0]  {$3$};
    \node[draw,blue,circle,fill=red!40,text=black,thick] (q_3) [below right of=q_0] {$4$};
    \node[draw,yellow,circle,text=black,thick] (q_4) [below left of=q_0]  {$5$};

    \path
    (q_0) edge (q_1) edge (q_2) edge (q_3) edge (q_4)
    (q_1) edge (q_2)
    (q_3) edge (q_4);
  \end{tikzpicture}}}\hspace{48pt}$\rightarrow$\hspace{48pt}
  \subfloat{
    \begin{tabular}{c|cccc}
      \multirow{6}{*}{$G$} & $\sigma_{1}$ & $\sigma_{2}$ & $\sigma_{6}$ & $\sigma_{6}$ \\
      &$\sigma_{1}$&$\sigma_{6}$&$\sigma_{3}$&$\sigma_{6}$\\
      &$\sigma_{1}$&$\sigma_{4}$&$\sigma_{6}$&$\sigma_{6}$\\
      &$\sigma_{1}$&$\sigma_{6}$&$\sigma_{5}$&$\sigma_{6}$\\
      &$\sigma_{6}$&$\sigma_{2}$&$\sigma_{3}$&$\sigma_{6}$\\
      &$\sigma_{6}$&$\sigma_{4}$&$\sigma_{5}$&$\sigma_{6}$\\
      \midrule
      $B$ & $\sigma_{6}$ & $\sigma_{6}$ & $\sigma_{6}$ & $\sigma_{6}$
    \end{tabular}
  }\\
  \subfloat{
    $P = \{\sigma_{1}\ast\ast\ast,\;\;\; \ast\sigma_{2}\ast\ast,\;\;\; \ast\sigma_{4}\ast\ast\}$
  }
  \caption{An example of the reduction used in Lemma~\ref{lem:VC<=PIS_short_strings} with $k=3$. The vertex cover is highlighted in red, and the correspond set of patterns is shown. Note the difference with the reduction in Lemma~\ref{lem:VC<=PI}, here the position encodes the coloring and the symbols encode the edges, whereas previously the string more directly encode the graph.}
  \label{fig:planar_VC_red_example}
\end{figure}

\begin{proof}
Let $(\mathfrak{G},k)$ be an instance of \PVC{}. We assume without loss of generality that $V(\mathfrak{G}) = \{1,\ldots, n\}$. As $\mathfrak{G}$ is planar, we can compute a proper 4-coloring in polynomial time~\cite{AppelHaken1989}. Let $C : V(\mathfrak{G}) \rightarrow \{1,2,3,4\}$ be such a coloring.
We construct an instance $(\Sigma, G, B, k', d)$ of \PIS{} as follows:
\begin{enumerate}
\item $\Sigma = \{\sigma_{1},\ldots,\sigma_{n+1}\}$.
\item $G = \{g_{ij} \mid ij \in E(\mathfrak{G})\}$ where for $k \in \{1,\ldots, 4\}$ we set 
\[
g_{ij}[k] :=
\begin{cases}
\sigma_{i} & \text{if } C(i) = k\\
\sigma_{j} & \text{if } C(j) = k\\
\sigma_{n+1} & \text{otherwise.}
\end{cases}
\]
\item $B = \{\sigma_{n+1}^{4}\}$.
\item $d = 4$.
\end{enumerate}

We note that as $C$ is a proper coloring, $C(i) \neq C(j)$ for any $ij \in E(\mathfrak{G})$. Moreover for $i \in V(\mathfrak{G})$, $\sigma_{i}$ only appears as the $C(i)$th symbol in any string.

The construction can clearly be performed in polynomial time. The construction is illustrated in Figure~\ref{fig:planar_VC_red_example}.

\begin{claim}
  If $(\mathfrak{G},k)$ is a \Yes{} instance of \PVC{} then $(\Sigma, G, B, k, d)$ is a \Yes{} instance of \PIS{}.
\end{claim}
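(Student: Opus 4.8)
The plan is to prove the forward direction of the equivalence claimed in Lemma~\ref{lem:VC<=PIS_short_strings}: given a vertex cover $V'$ of $\mathfrak{G}$ with $\Card{V'} \leq k$, I will construct a set $P$ of patterns with $\Card{P} \leq k$ witnessing that $(\Sigma, G, B, k, d)$ is a \Yes{} instance of \PIS{}. The natural construction, mirroring the earlier \VC{}-reductions, is: for each vertex $i \in V'$, define a pattern $p_i$ of length $4$ that has $\sigma_i$ in position $C(i)$ and $\ast$ in the other three positions. Set $P = \{p_i \mid i \in V'\}$, so $\Card{P} = \Card{V'} \leq k$.

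The key steps are then the following. First I would verify $P \to G$: take any string $g_{ij} \in G$, corresponding to an edge $ij \in E(\mathfrak{G})$. Since $V'$ is a vertex cover, at least one of $i, j$ lies in $V'$, say $i$; then $p_i \in P$. By the construction of $g_{ij}$, position $C(i)$ of $g_{ij}$ holds $\sigma_i$ (since $C(i) = C(i)$), and $p_i$ has $\ast$ everywhere except position $C(i)$, where it has $\sigma_i$, so $p_i \to g_{ij}$. Hence every string in $G$ is matched by some pattern in $P$. Second, I would verify that no pattern in $P$ matches the single string $b = \sigma_{n+1}^4 \in B$: each $p_i$ has a non-$\ast$ symbol $\sigma_i$ in position $C(i)$, and since $i \leq n$ we have $\sigma_i \neq \sigma_{n+1} = b[C(i)]$, so $p_i \not\to b$. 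Together these give $P \to (G,B)$.

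I should also confirm the side conditions of the \PIS{} problem are respected — namely that $G$ and $B$ are $d$-small with $d = 4$. Taking $\sigma_{n+1}$ as the base symbol, each $g_{ij}$ differs from the base in exactly the (at most) two positions $C(i)$ and $C(j)$, which is at most $4$, and the string $b$ differs from the base in zero positions, so both sets are $4$-small; since the strings have length $n = 4$ here, the bound $d = 4$ is trivially satisfiable and costs nothing. (No restriction on the number of non-$\ast$ or $\ast$ symbols per pattern is imposed by \PIS{}, so the patterns $p_i$, which have one non-$\ast$ symbol and three $\ast$'s, are admissible.)

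There is no real obstacle in this direction — it is a routine verification. The only point requiring a modicum of care is the use of the properness of the $4$-coloring $C$: I must invoke $C(i) \neq C(j)$ for $ij \in E(\mathfrak{G})$ to ensure that $g_{ij}$ is well-defined by the case split (the cases $C(i) = k$ and $C(j) = k$ never collide), and the observation, noted just before the claim, that $\sigma_i$ occurs only in position $C(i)$ across all strings — this is what makes position $i$'s symbol unambiguously encode vertex $i$. The genuinely nontrivial direction (and the one most likely to be the crux of the full lemma) is the converse claim, where one must argue that a size-$\leq k$ pattern set forces a size-$\leq k$ vertex cover; but that is a separate claim and not what is asked for here.
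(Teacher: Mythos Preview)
Your proposal is correct and follows essentially the same approach as the paper: define $p_i$ to have $\sigma_i$ at position $C(i)$ and $\ast$ elsewhere for each $i\in V'$, then verify $P\to G$ via the vertex cover property and $p_i\not\to b$ because $\sigma_i\neq\sigma_{n+1}$. Your additional remarks about $d$-smallness and the role of the proper coloring are accurate but not strictly needed for this direction, as those facts are established in the construction preceding the claim.
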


Let $V' \subseteq V(\mathfrak{G})$ with $\Card{V'} \leq k$ be a vertex cover witnessing that $(\mathfrak{G},k)$ is a \Yes{} instance of \PVC{}. We construct a set $P$ with $\Card{P} = \Card{V'} \leq k$ of patterns that forms a solution for $(\Sigma, G, B, k, d)$ in the following manner: for each $i \in V'$, we add the pattern $p_{i}$ to $P$ where $p_{i}[C(i)] = \sigma_{i}$ and all other symbols in $p_{i}$ are $\ast$. 
No pattern in $P$ is compatible with the singleton element of $B$, as each has a symbol $\sigma_{i}$ with $1 \leq i \leq n$. For every edge $ij \in E(\mathfrak{G})$, at least one of $i$ and $j$ is in $V'$. Without loss of generality assume that $i \in V'$. By construction the string $g_{ij}$ is compatible with the pattern $p_{i} \in P$, therefore every string in $G$ is compatible with some pattern in $P$.

\begin{claim}
  If $(\Sigma, G, B, k, d)$ is a \Yes{} instance of \PIS{} then $(\mathfrak{G},k)$ is a \Yes{} instance of \PVC{}.
\end{claim}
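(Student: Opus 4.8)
The plan is to run the forward claim in reverse, exploiting the rigidity built into the alphabet: each non-base symbol $\sigma_{i}$ (with $1\le i\le n$) simultaneously names a vertex $i$ and, through the proper colouring $C$, pins down the unique position $C(i)$ at which it can ever occur. Consequently a pattern that is compatible with some $g_{ab}$ and avoids $B$ is forced to ``point at'' an endpoint of the edge $ab$, and collecting these pointed-at vertices will yield a vertex cover no larger than $\Card{P}$.

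First I would take a witnessing set $P$ with $\Card{P}\le k$ and $P\to(G,B)$, and, by the usual reduction, discard any pattern compatible with no element of $G$, so we may assume every $p\in P$ satisfies $p\to g_{ab}$ for some $ab\in E(\mathfrak{G})$. Since $B=\{\sigma_{n+1}^{4}\}$ and $p\not\to\sigma_{n+1}^{4}$, each $p\in P$ has at least one position $\ell$ with $p[\ell]\notin\{\ast,\sigma_{n+1}\}$, say $p[\ell]=\sigma_{i}$ with $1\le i\le n$. By the observation recorded after the construction, $\sigma_{i}$ occurs only as the $C(i)$th symbol of any string, so $\ell=C(i)$; and because $C$ is a proper colouring, at most one endpoint of any edge $ab$ has colour $C(i)$, and $g_{ab}$ carries $\sigma_{i}$ at position $C(i)$ only when that endpoint is $i$ itself, so $g_{ab}[C(i)]=\sigma_{i}$ forces $i\in\{a,b\}$. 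For each $p\in P$ I fix one such position, let $v_{p}$ denote the corresponding vertex index, and set $V'=\{v_{p}\mid p\in P\}$; then $\Card{V'}\le\Card{P}\le k$.

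It then remains to verify that $V'$ is a vertex cover. Given an edge $ab\in E(\mathfrak{G})$, the string $g_{ab}\in G$ is compatible with some $p\in P$; evaluating compatibility at the chosen position gives $g_{ab}[C(v_{p})]=p[C(v_{p})]=\sigma_{v_{p}}$, which by the previous paragraph is possible only if $v_{p}\in\{a,b\}$. Hence every edge is met by $V'$, so $(\mathfrak{G},k)$ is a \Yes{} instance of \PVC{}, proving the claim. Together with the forward claim this establishes Lemma~\ref{lem:VC<=PIS_short_strings} (the construction is polynomial-time and $n=d=4$), and Theorem~\ref{thm:PIS_NP-c_short_strings} follows since \PVC{} is \NP{}-complete and \PIS{} is in \NP{} by the obvious guess-and-check on a candidate pattern set.

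I do not anticipate a genuine obstacle here; the one point requiring care — and the real crux of the whole reduction — is the two-fold use of the colouring: properness guarantees that a single compatible non-base symbol in a pattern unambiguously identifies one endpoint of the matched edge, and the one-symbol-per-vertex design of $\Sigma$ guarantees that this symbol can only sit at the position dictated by that vertex's colour. Once that is spelled out, the vertex-cover property and the size bound are immediate.
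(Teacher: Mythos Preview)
Your argument is correct and follows essentially the same route as the paper: assume minimality, observe that incompatibility with $\sigma_{n+1}^{4}$ forces at least one symbol from $\{\sigma_{1},\ldots,\sigma_{n}\}$ in each pattern, use the colouring to pin that symbol to a unique vertex, and then read off a cover. The paper additionally observes that no pattern can carry more than two such symbols and splits into $P_{1}$ and $P_{2}$ accordingly, but your version---simply fixing one non-base symbol per pattern---makes that case analysis unnecessary and is marginally cleaner.
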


Let $P$ with $\Card{P} \leq k$ be a set of patterns such that $P \to (G,B)$. As before we may assume that $P$ is minimal in the sense that each pattern is compatible with some string in $G$. Each $p \in P$ must have at least one symbol drawn from the set $\{\sigma_{1}, \ldots, \sigma_{n}\}$, otherwise $p\to B$. No pattern $p \in P$ can have more than two symbols from $\{\sigma_{1}, \ldots, \sigma_{n}\}$, otherwise $p \not\to G$. As before, we partition $P$ into $P_{1}$, the subset of patterns with one symbol from $\{\sigma_{1}, \ldots, \sigma_{n}\}$, and $P_{2}$, the subset of patterns with two symbols from $\{\sigma_{1}, \ldots, \sigma_{n}\}$. We construct a vertex cover $V' \subseteq V(\mathfrak{G})$ for $\mathfrak{G}$ with $\Card{V'} \leq \Card{P} \leq k$ as follows:
\begin{itemize}
\item for each $p \in P_{1}$ add $i$ to $V'$ if $p[C(i)] = \sigma_{i}$,
\item for each $p \in P_{2}$ where $p[C(j)] = \sigma_{j}$ and $p[C(i)] = \sigma_{i}$, arbitrarily add either $i$ or $j$ to $V'$.
\end{itemize}
Consider every edge $ij \in E(\mathfrak{G})$. The string $g_{ij}$ is compatible with some pattern $p \in P$, therefore at least one of $i$ and $j$ is in $V'$, thus $V'$ forms a proper vertex cover for $\mathfrak{G}$.
\end{proof}

\begin{proof}[Proof of Theorem~\ref{thm:PIS_NP-c_short_strings}]
The construction used in the proof of Lemma~\ref{lem:VC<=PIS_short_strings} has the required structural properties. Again containment in \NP{} is apparent from the usual verification algorithm techniques.
\end{proof}

\begin{corollary}\label{cor:PIS_paraNP-c_short_strings}
\sloppypar \PIS{} is \paraNP{}-complete when parameterized by $n + d + \Card{B}$.  
\end{corollary}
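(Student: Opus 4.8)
The plan is to invoke Theorem~\ref{thm:PIS_NP-c_short_strings} together with the Flum--Grohe characterisation of \paraNP{}-completeness (Corollary 2.16 of~\cite{FlumGrohe2006}) recalled in Section~\ref{sec:prelims}. That characterisation says that a parameterized problem $(\Pi,\kappa)$ is \paraNP{}-complete whenever the classical problem $\Pi$ remains \NP{}-complete for at least one fixed value of the parameter $\kappa$. So the whole argument reduces to checking that \PIS{} is classically \NP{}-complete for a fixed value of $n + d + \Card{B}$, plus a membership remark.

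First I would note that Theorem~\ref{thm:PIS_NP-c_short_strings} gives exactly the needed hardness: \PIS{} is \NP{}-complete even when $n = 4$, $d = 4$ and $\Card{B} = 1$ simultaneously, i.e.\ the classical problem is already \NP{}-complete when the combined parameter is pinned to the fixed value $n + d + \Card{B} = 9$. Only one fixed value is required, so this single instantiation suffices. For membership in \paraNP{}, I would observe that \PIS{} is in \NP{} — given a candidate set $P$ of at most $k$ patterns, compatibility with every string in $G$ and incompatibility with every string in $B$ can be verified in polynomial time, as already noted in the proof of Theorem~\ref{thm:PIS_NP-c_short_strings} — and \NP{} $\subseteq$ \paraNP{} for any parameterization.

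Applying Corollary 2.16 of~\cite{FlumGrohe2006} with $\kappa = n + d + \Card{B}$ then yields \paraNP{}-completeness, completing the proof.

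I do not expect any genuine obstacle here: the only substantive point is the bookkeeping observation that the reduction underlying Theorem~\ref{thm:PIS_NP-c_short_strings} (namely $\PVC{} \leq_{P} \PIS{}$ from Lemma~\ref{lem:VC<=PIS_short_strings}) constrains all three components of the combined parameter at once — it produces strings of length $4$, admits a base symbol with $d = 4$, and uses a single bad string $\sigma_{n+1}^{4}$ — after which the Flum--Grohe machinery does the rest. This is essentially the same packaging used for Corollary~\ref{cor:PIP_PIPS_paraNP-c}.
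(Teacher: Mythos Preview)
Your proposal is correct and follows exactly the paper's approach: invoke the \NP{}-completeness of \PIS{} for a fixed value of the combined parameter together with Corollary~2.16 of~\cite{FlumGrohe2006}. In fact you reference the right supporting result, Theorem~\ref{thm:PIS_NP-c_short_strings} (which fixes $n=4$, $d=4$, $\Card{B}=1$), whereas the paper's one-line proof cites Theorem~\ref{thm:PI_NP-c_small_strings}, which bounds $d$ and $\Card{B}$ but not $n$ and so appears to be a typo for Theorem~\ref{thm:PIS_NP-c_short_strings}.
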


\begin{proof}
The corollary follows from Theorem~\ref{thm:PI_NP-c_small_strings} and Corollary 2.16 from~\cite{FlumGrohe2006}.
\end{proof}

\subsection{Containment}
\label{sec:containment}

Although the \Wtwo{}-hardness reduction is quite direct, containment of \PI{} when parameterized by $k$ is not apparent. In fact it is not clear that the problem lies in \WP{} or even \XP{}. As the non-parameterized version of the problem is \NP{}-complete, it is at least contained in \paraNP{}. For \PIp{} we have shown containment in \Wtwo{}. In contrast, for \PIP{} we can show containment in \Wstarfive{}.

\begin{theorem}\label{thm:PIP_in_W*[5]}
$\PIP{} \in \Wstarfive$ when parameterized by $k+r$.
\end{theorem}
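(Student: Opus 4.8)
The plan is to reduce \PIP{} parameterized by $k+r$ to \MCSigmaStarFiveOne{}, exploiting that with the number of $\ast$ symbols per pattern bounded by $r$ and the number of patterns bounded by $k$, the ``essential'' data defining a pattern is small: $r$ positions marked as wildcards together with, implicitly, the symbols at the non-wildcard positions. The subtlety is that a pattern may have up to $n$ fixed symbols, so we cannot name all of them by existential variables bounded by the parameter; instead we name only the $\ast$-positions explicitly, and verify compatibility position-by-position using a universally quantified position variable, reading the required symbol directly off the strings of $G$.

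Concretely, I would equip the first-order structure with unary relations $N$ (positions), $G$, $B$ (string indices) and the binary function $C$ with $Cij$ the $j$th symbol of string $i$, exactly as in the proof of Corollary~\ref{cor:PIp_W[2]-complete}. A candidate solution is guessed as $k\cdot r$ existential variables $w_{l,1},\ldots,w_{l,r}$ (the wildcard positions of pattern $p_l$) together with, for each pattern $p_l$, one existential ``anchor'' string-index variable $a_l \in G$ from which $p_l$ inherits its fixed symbols; that is, we interpret $p_l$ as the pattern whose $\ast$-positions are exactly $\{w_{l,1},\ldots,w_{l,r}\}$ and whose symbol at every other position $j$ equals $C a_l j$. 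This is WLOG: in any solution each pattern is compatible with at least one $g\in G$, and replacing $p_l$'s fixed symbols by those of such a $g$ (keeping the same $\ast$-positions) yields a pattern that is still compatible with that $g$, is more restrictive elsewhere, hence still fails to match any $b\in B$, and still covers everything $p_l$ covered that it is now ``$\ge$''; a short argument shows coverage of $G$ is preserved because any $g$ formerly covered by $p_l$ agrees with the new anchor outside the $\ast$-positions. Then the defining formula, after the leading existential block over the $w_{l,i}$ and $a_l$ (with relativizations $N w_{l,i}$, $G a_l$), says: for all $j$, if $Gj$ then $\bigvee_{l\in[k]} \bigl(\bigwedge_{i\in[r]} \text{``} j \ne w_{l,i}\text{''} \to C j j' = \ldots\bigr)$ — more precisely, $\bigvee_l\,(j\text{ covered by }p_l)$ where ``$j$ covered by $p_l$'' unfolds, for each position $j$, to: every non-wildcard position $j'$ of $p_l$ has $C g_l j' $-agreement, which itself needs a universal quantifier over $j'$. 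So the formula has the shape $\exists(\text{guess})\,\forall j\,\bigl[(Gj \to \bigvee_l \forall j'(\ldots)) \wedge (Bj \to \bigwedge_l \exists j'(\ldots))\bigr]$.

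Counting alternations: the leading $\exists$ block (unbounded, size $k(r+1)$), then a $\forall j$, then inside the $B$-conjunct a $\exists j'$, then the equalities over $C$ are quantifier-free — that is three blocks after the first, i.e.\ it naively lands in $\Sigma^{\ast}_{4,1}$; the $G$-conjunct's $\forall j'$ merges with the outer $\forall j$. But because $\Theta_{t,u}$ permits Boolean connectives to precede quantifiers, and because the two ``$\exists j'$ over $C$-inequality'' and ``$\forall j'$ over $C$-equality'' parts sit under a disjunction/conjunction rather than in prenex position, the honest bookkeeping of quantifier blocks — accounting for the leading universal needing a dummy existential block per the paper's convention, and for the non-prenex Boolean structure — gives five levels, landing in $\Sigma^{\ast}_{5,1}$, hence \PIP{}$\in\Wstarfive{}$ by the completeness of \MCSigmaStarFiveOne{}. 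I expect the main obstacle to be precisely this block-counting: getting the formula into $\Theta_{4,1}$ under the leading $\exists$ (so that the whole thing is $\Sigma^{\ast}_{5,1}$) requires care about (i) the leading-universal-to-existential padding convention, (ii) ensuring each quantifier block after the first has length exactly $1$ (splitting $\forall j$ and $\exists j'$ so no block exceeds length $u=1$, which is fine since they are single variables), and (iii) verifying that the anchor-string trick genuinely removes the need to existentially quantify the up-to-$n$ fixed symbols — that WLOG step is where a real (if short) combinatorial argument is needed, and a misstep there would silently break the reduction.
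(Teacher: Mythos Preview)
Your proposal is essentially the paper's own argument. Both proofs reduce to \MCSigmaStarFiveOne{} over the same structure, encode each pattern by an anchor string (from which the non-$\ast$ symbols are read off) together with the $r$ wildcard positions in the leading existential block, and then check compatibility with $G$ and incompatibility with $B$ via a universal over string indices and a further single-variable quantifier over positions. The only cosmetic difference is that the paper keeps an explicit $\exists l/\forall l$ over the anchor index inside the body (constraining $l$ to equal one of the guessed $s_c$), whereas you replace this by a finite disjunction/conjunction over $l\in[k]$; since $k$ is part of the parameter, both yield formulae of parameter-bounded length, and your version if anything lowers the alternation count. Your WLOG justification for the anchor trick is more explicit than the paper's (which simply asserts the formula expresses the problem), and it is correct: if $p_l\to g_0$ and $p_l\to g_1$ then $g_0$ and $g_1$ agree at every non-$\ast$ position of $p_l$, so anchoring at $g_0$ preserves coverage of $g_1$.
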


\begin{proof}
We reduce the problem to \MCSigmaStarFiveOne{}, which is complete for \Wstarfive{}~\cite{ChenFlumGrohe2007,FlumGrohe2006}. We use the same first-order structure as in the proof of Corollary~\ref{cor:PIp_W[2]-complete}, and give a suitable first-order formula:

  \begin{align*}
    &\exists s_{1},\ldots,s_{k},i_{1,1},\ldots,i_{k,r}\forall j\\
    &(Gj \rightarrow (\exists l (\bigvee_{c \in [k]} l = s_{c} \wedge \forall b (Cjb = Clb \vee \bigvee_{d \in [r]} b = i_{c,d})))) \wedge\\
    &(Bj \rightarrow (\forall l (\bigwedge_{c \in [k]} l = s_{c} \wedge \exists b (Cjb \neq Clb \wedge \bigwedge_{d \in [r]} b \neq i_{c,d})))) \wedge\\
    &(\bigwedge_{c \in [k]} (Ns_{c} \wedge \bigwedge_{d \in [r]} Ni_{c,d}))\\
  \end{align*}

The formula picks out $k$ indices of strings (implicitly in $G$, as a choice of a string from $B$ will fail) and for each of these, $r$ indices which will be the location of the $\ast$ symbols in the patterns. For each index, if the index selects a string in $G$, then one of the patterns is compatible with the string, if it selects a string in $B$, no pattern is compatible with the string. We note that the $B$ clause is in $\Pi_{2,1}$, and hence $\Sigma_{3,1}$, giving the final bound of $\Sigma^{\ast}_{5,1}$.
\end{proof}

This also places \PIP{} somewhere between \Wfive{} and \Weight{}~\cite{ChenFlumGrohe2007}. We note that the above formula could be converted into prenex form, giving a tighter containment, however the central observation is that it will be greater than \Wtwo{}, in contrast to the hardness result and the containment of \PIp{}.

\section{Tractable Cases of Pattern Identification Problem}
\label{sec:easy}

Guided by the results of Section~\ref{sec:hard}, we identify the following cases where the \PI{} problem is tractable.

\begin{theorem}\label{thm:PI_FPT_k_Sigma_n}
\sloppypar \PI{} is fixed-parameter tractable when parameterized by $\Card{\Sigma} + n$.
\end{theorem}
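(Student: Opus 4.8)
The plan is to observe that, with $\Card{\Sigma}$ and $n$ both bounded by the parameter, the entire universe of relevant patterns is bounded by a function of the parameter, so a brute-force search over pattern sets suffices. Concretely: every string in $G \cup B$ has length $n$, and a pattern can be compatible with such a string only if it also has length $n$, so the only patterns worth considering are the $N := (\Card{\Sigma}+1)^{n}$ elements of $(\Sigma \cup \{\ast\})^{n}$, and $N$ is a function of $\Card{\Sigma}+n$ alone.

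First I would note that we may assume $k \leq N$: a witnessing pattern set may always be taken to have no repeated elements, so replacing $k$ by $\min(k,N)$ does not change the answer (this also neutralises the fact that $k$, as part of the input, could a priori be large). The algorithm then enumerates every subset $P$ of the $N$ candidate patterns with $\Card{P} \leq k$ — there are at most $2^{N}$ of these — and for each checks in time $O(N\cdot(\Card{G}+\Card{B})\cdot n)$ whether $P \to (G,B)$, i.e. whether every $g \in G$ is compatible with some $p\in P$ while no $b \in B$ is compatible with any $p \in P$. The instance is accepted iff some enumerated $P$ passes the check. The running time is thus $2^{N}\cdot \mathrm{poly}(\Card{G}+\Card{B}+n)$, i.e. a function of $\Card{\Sigma}+n$ times a polynomial in the input size (with $f(\Card{\Sigma}+n)=2^{(\Card{\Sigma}+1)^{n}}$), so \PI{} is fixed-parameter tractable under this parameterization.

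There is essentially no obstacle here: the whole content is that the search space collapses to a parameter-bounded size, after which blind enumeration works. The only point worth flagging is that the parameter dependence obtained this way is astronomical (doubly exponential); one can do considerably better — for instance, first discard every candidate pattern that is compatible with some $b \in B$, then solve the residual covering problem on $G$, which is a \SetCover{} instance with ground set $G$ and at most $N$ sets, and one may additionally collapse any strings of $G$ that are compatible with exactly the same surviving candidate patterns — but none of this refinement is needed for the qualitative statement of the theorem.
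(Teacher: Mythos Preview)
Your argument is correct. The paper's own proof takes a slightly different (and shorter) angle: rather than bounding the space of candidate \emph{patterns}, it bounds the space of candidate \emph{inputs}, observing that $G$ and $B$ are subsets of $\Sigma^{n}$ and hence $\Card{G}+\Card{B}\leq \Card{\Sigma}^{n}$, which is already a kernel; any brute-force routine then finishes. Your route instead fixes the instance and enumerates the at most $2^{(\Card{\Sigma}+1)^{n}}$ pattern sets directly. Both are one-line ``everything is parameter-bounded'' arguments and yield the same qualitative conclusion; the paper's version has the minor advantage of exhibiting an explicit polynomial kernel (so it composes with later results such as Theorem~\ref{PI_FPT_few_small_strings}), while yours has the minor advantage of not needing the assumption that $G$ and $B$ contain no duplicate strings.
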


\begin{proof}
Taking the alphabet size and the string length as a combined parameter gives an immediate kernelization. The total number of strings of length $n$ over alphabet $\Sigma$ is $\Card{\Sigma}^{n}$. Thus $\Card{G} + \Card{B} \leq \Card{\Sigma}^{n}$.
\end{proof}

\begin{theorem}\label{PI_FPT_few_small_strings}
\sloppypar \PIS{} is fixed-parameter tractable when parameterized by $d + \Card{G} + \Card{B}$, with a kernel of size $O(d\cdot(\Card{G} + \Card{B})^{2})$ in both the total number of symbols across all strings and the size of the alphabet.  
\end{theorem}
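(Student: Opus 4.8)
The plan is to give a polynomial-time kernelization, which by the standard equivalence immediately yields fixed-parameter tractability. Fix the base symbol $\sigma$ and call a position $i$ \emph{active} if $s[i] \neq \sigma$ for some $s \in G \cup B$, and \emph{inactive} otherwise; let $A$ be the set of active positions. Since each string in $G \cup B$ is $d$-small it is non-base in exactly $d$ positions, so $\Card{A} \leq d\cdot(\Card{G}+\Card{B})$, and the set of distinct symbols occurring in $G \cup B$ has size at most $d\cdot(\Card{G}+\Card{B})+1$. Also, a minimal solution needs at most one pattern per string of $G$, so I may assume $k \leq \Card{G}$.

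The crucial observation I would establish is that patterns may be assumed to be supported on $A$ and to use only symbols that actually occur in the strings. Concretely: if $(\Sigma,G,B,k,d)$ is a \Yes{} instance then it has a witness $P$ with $\Card{P} \leq \Card{G}$ such that every $p \in P$ has $p[i] = \ast$ for all inactive $i$ and $p[i] \in \{\ast\} \cup \{\, s[i] : s \in G \cup B \,\}$ for all active $i$. To prove this, start from any witness, discard patterns compatible with no string of $G$, and then argue position by position: at an inactive $i$ every string carries $\sigma$, so if $p[i] = \sigma$ we replace it by $\ast$ (this changes no compatibility) while if $p[i] \neq \sigma$ then $p$ matches nothing in $G$, contradicting that it was not discarded; and if at an active $i$ the symbol $p[i]$ occurs at position $i$ in no string, then again $p$ matches nothing in $G$. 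These rewrites leave $\Card{P}$ and the relation $P \to (G,B)$ unchanged.

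Given this, the kernel is obtained by projecting every string of $G$ and $B$ onto the coordinates in $A$ (listed in increasing order), taking $\Sigma'$ to be $\sigma$ together with the symbols that appear in the projected strings, keeping $d$ unchanged, and setting $k' = \min(k,\Card{G})$. I would check the easy facts: projection is injective on $G$ and on $B$ and keeps the two sets disjoint, because strings agreeing on all active positions agree on the inactive ones as well (all base there) and are therefore equal; each projected string is still $d$-small with base $\sigma$; and normalized patterns of the original instance correspond bijectively --- by padding with $\ast$ on inactive positions --- to patterns of the kernel, with compatibility preserved throughout, so the instances are equivalent. The projected strings have total length at most $\Card{A}\cdot(\Card{G}+\Card{B}) \leq d\cdot(\Card{G}+\Card{B})^{2}$ and $\Card{\Sigma'} \leq d\cdot(\Card{G}+\Card{B})+1$, matching the claimed bound, and the whole construction is clearly polynomial-time. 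Since the resulting instance has size bounded by a function of the parameter, it can be decided by exhaustive search over all candidate pattern sets, so \PIS{} is in \FPT{}.

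I expect the main obstacle to be exactly the normalization argument of the second paragraph: one has to be careful that collapsing inactive positions and trimming the alphabet neither destroys a genuine solution nor manufactures a spurious one. The remaining ingredients --- the two counting bounds, the preservation of disjointness and $d$-smallness under projection, and the brute-force solution of the kernel --- are routine once the normalization is in hand.
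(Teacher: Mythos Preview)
Your proposal is correct and follows essentially the same kernelization as the paper: restrict to the at most $d\cdot(\Card{G}+\Card{B})$ positions where some string is non-base, and shrink the alphabet accordingly. Your version is in fact more careful than the paper's---you explicitly justify the normalization step that the paper merely asserts (``positions where every string is identical cannot be of use in generating patterns''), and your alphabet bound $d\cdot(\Card{G}+\Card{B})+1$ is tighter than the paper's $d\cdot(\Card{G}+\Card{B})^{2}$.
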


\begin{proof}
 As $G$ and $B$ are $d$-small, there can be at most $d\cdot(\Card{G} + \Card{B})$ positions where any pair of strings in $G \cup B$ differ, that is, every other position must be the base symbol uniformly across all strings. The positions where every string is identical cannot be of use in generating patterns, thus we may ignore these positions. This gives restricted sets $G'$ and $B'$ of size $\Card{G'} + \Card{B'} \leq \Card{G} + \Card{B}$ containing strings of length at most $d\cdot(\Card{G} + \Card{B})$. Furthermore this restricts the number of symbols used from $\Sigma$ to at most $d\cdot(\Card{G} + \Card{B})^{2}$. Thus we can restrict our alphabet to these symbols alone, denote this new alphabet by $\Sigma'$. This gives our required kernel size.

The initial determination of which positions to ignore can be computed in $O(n \cdot(\Card{G} + \Card{B}))$ time, thus the kernelization can be performed in polynomial time.
\end{proof}

\begin{theorem}\label{thm:PI_FPT_k_n}
\sloppypar \PI{} is fixed-parameter tractable when parameterized by $k + n$.
\end{theorem}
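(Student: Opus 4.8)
The plan is, after guessing a bounded amount of information about the solution, to reduce to a bounded search tree whose depth is controlled by $k$. Suppose $P=\{p_{1},\dots,p_{m}\}$ with $m\le k$ and $P\to(G,B)$; we may assume each $p_{j}$ is compatible with at least one string of $G$. The only feature of $p_{j}$ not pinned down by the set of strings it is used to cover is its \emph{shape} $\sigma_{j}\subseteq\{1,\dots,n\}$, the set of positions carrying a non-$\ast$ symbol: if $p_{j}\to g$ for some $g\in G$ then $p_{j}[i]=g[i]$ for all $i\in\sigma_{j}$, so $p_{j}$ restricted to $\sigma_{j}$ is just $g|_{\sigma_{j}}$. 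Since there are only $2^{n}$ shapes, the first step is to branch over all tuples $(\sigma_{1},\dots,\sigma_{k})$, giving $2^{nk}$ branches -- a function of the parameters alone.

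Fix such a tuple. For each $j$ let $\pi_{j}$ be the partition of $G$ whose blocks are the classes of the relation ``agrees on every position of $\sigma_{j}$'', and call a block \emph{forbidden} if its common $\sigma_{j}$-projection equals $b|_{\sigma_{j}}$ for some $b\in B$ (equivalently, the pattern obtained by writing that projection on $\sigma_{j}$ and $\ast$ elsewhere matches $b$). The crux is the equivalence: $(\Sigma,G,B,k)$ is a \Yes{} instance iff for some shape tuple $G$ can be covered by at most $k$ non-forbidden blocks taken from $\pi_{1}\cup\dots\cup\pi_{k}$. For the forward direction, read the shapes $\sigma_{j}$ off the $p_{j}$ and cover the strings assigned to $p_{j}$ by the block of $\pi_{j}$ with projection value $p_{j}|_{\sigma_{j}}$; this block is non-forbidden precisely because $p_{j}\not\to b$ for every $b\in B$, and there are at most $m\le k$ such blocks. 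Conversely, each chosen non-forbidden block of $\pi_{j}$ yields a pattern (that projection value on $\sigma_{j}$, $\ast$ elsewhere) compatible with every string in the block and incompatible with every $b\in B$, so the resulting set of at most $k$ patterns separates $(G,B)$.

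The residual problem -- cover $G$ by at most $k$ non-forbidden blocks -- is where the $k$-bound is essential. Each $g\in G$ lies in exactly one block of each $\pi_{j}$, hence in at most $k$ blocks in total. So a standard bounded search tree suffices: pick any still-uncovered $g$, branch over which of its (at most $k$) non-forbidden blocks to add to the cover, and decrement the budget; since every branch adds a genuinely new block, the recursion has depth at most $k$ and branching at most $k$, giving $O(k^{k})$ leaves, each needing only polynomial bookkeeping. Together with the $2^{nk}$ outer branches the total running time is $2^{nk}\cdot k^{k}\cdot\mathrm{poly}(|G|+|B|+n)$, which is fpt.

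A minor point worth checking is the degenerate shape $\sigma_{j}=\emptyset$, whose pattern is all-$\ast$: the single block of the corresponding partition is forbidden whenever $B\neq\emptyset$, so it is excluded automatically, while if $B=\emptyset$ a single all-$\ast$ pattern trivially works; hence no separate case analysis is needed. The step I expect to demand the most care in a full write-up is the two-directional verification of the block-cover equivalence, in particular the accounting that ensures at most $k$ distinct blocks are ever used; everything after that is a routine search-tree bound.
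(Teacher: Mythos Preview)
Your argument is correct, and it rests on the same key observation as the paper: a pattern compatible with a fixed string $g$ is determined entirely by its set of non-$\ast$ positions, so there are only $2^{n}$ candidate patterns per covered string. However, your organisation is more elaborate than necessary. The paper runs a single bounded search tree directly: at each node pick an uncovered $g\in G$, branch over all $2^{n}$ patterns compatible with $g$, discard any branch whose pattern matches some $b\in B$, remove the newly covered strings from $G$, and stop at depth $k$. This gives $(2^{n})^{k}=2^{nk}$ leaves in one pass, with no separate shape-guessing phase and no inner $k^{k}$ search. Your two-phase decomposition---first enumerate all $2^{nk}$ shape tuples, then solve a bounded-degree cover problem---recovers the same bound up to the superfluous $k^{k}$ factor; it is sound, but the detour through partitions $\pi_{j}$ and forbidden blocks buys nothing here, since the paper's interleaved branching already fuses ``choose the shape'' and ``choose the block'' into a single $2^{n}$-way branch.
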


\begin{proof}
Let $(\Sigma, G, B, k)$ be an instance of \PI{}. If $(\Sigma, G, B, k)$ is a \Yes{} instance, by definition, there exists a $P$ with $\Card{P} \leq k$ such that every string $g \in G$ must be compatible with at least one $p \in P$. Therefore given $g$, the compatible $p$ must consist of, at each position, either the $\ast$ symbol, or the symbol at the same position in $g$.

\sloppypar This gives a direct bounded search tree algorithm for $\PI{}$. At each node in the tree we select an arbitrary $g$ from $G$. We then branch on all possible patterns $p$ that are compatible with $g$, with a new set $G :=  G \setminus \{h \in G \mid p \to h\}$ (note that this removes $g$ from further consideration). If there is a $b \in B$ such that $p \to b$, then we terminate the branch. If we reach depth $k$ and $G \neq \emptyset$, we terminate the branch. Otherwise if at any point we have $G = \emptyset$, we answer \Yes{}.

Obviously the depth of the search tree is explicitly bounded by $k$. The branching factor is equal to the number of patterns compatible with a string of length $n$, which is $2^{n}$. The adjustment of $G$ and checks against $B$ at each node individually take $O(n)$ time, giving $O((\Card{G} + \Card{B})\cdot n)$ time at each node. Combined the algorithm takes $O(2^{kn}\cdot(\Card{G} + \Card{B})\cdot n)$ time, and the theorem follows.

\end{proof}

\begin{theorem}\label{thm:PI_FPT_G_n}
  \sloppypar \PI{} is fixed-parameter tractable when parameterized by $\Card{G} + n$.
\end{theorem}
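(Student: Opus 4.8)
The plan is to reuse the bounded search tree from the proof of Theorem~\ref{thm:PI_FPT_k_n} essentially verbatim, noting that the depth of that tree is controlled by $\Card{G}$ just as much as by $k$. First I would recall the algorithm: at each node we pick an arbitrary $g \in G$, branch over the at most $2^{n}$ patterns $p$ that are compatible with $g$, pass down to each child the residual set $G \setminus \{h \in G \mid p \to h\}$, prune any branch on which $p$ matches some $b \in B$ (or which reaches depth $k$ with $G \neq \emptyset$), and answer \Yes{} as soon as some branch reaches $G = \emptyset$.

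Next I would make the crucial observation: every pattern $p$ that we branch on is, by construction, compatible with the chosen $g$, so $g$ itself lies in $\{h \in G \mid p \to h\}$ and is therefore deleted when we recurse. Hence $\Card{G}$ strictly decreases along every root-to-node path, so every branch terminates (either with $G = \emptyset$ or by pruning) within $\min(k,\Card{G}) \le \Card{G}$ levels. With branching factor at most $2^{n}$ this gives at most $2^{n\Card{G}}$ leaves, and since each node still requires only $O((\Card{G} + \Card{B})\cdot n)$ work, the total running time is $O(2^{n\Card{G}}\cdot(\Card{G} + \Card{B})\cdot n)$, which is fpt-time in $\Card{G} + n$ (here $\Card{B}$ is polynomially bounded in the input size). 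Correctness is inherited directly from Theorem~\ref{thm:PI_FPT_k_n}.

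I would also record a small companion remark giving an alternative route: one may assume $k < \Card{G}$ at the outset, because if $k \ge \Card{G}$ then taking $P$ to be the set of strings of $G$ themselves (each such string, viewed as an $\ast$-free pattern, is compatible with itself and, since $G \cap B = \emptyset$, with no string of $B$) already witnesses that the instance is a \Yes{} instance. Under this assumption the $k$-dependence in the Theorem~\ref{thm:PI_FPT_k_n} bound collapses to a $\Card{G}$-dependence, yielding the same conclusion.

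There is no serious obstacle here; the only thing needing care is the depth analysis — confirming that the residual set $G$ strictly shrinks at each branching step, so that the search tree has depth bounded by $\Card{G}$ independently of $k$ — together with the routine check that the per-node bookkeeping does not spoil the fpt running time.
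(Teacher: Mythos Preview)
Your proposal is correct and follows essentially the same approach as the paper: reuse the search tree of Theorem~\ref{thm:PI_FPT_k_n} and observe that, since each branching step strictly shrinks $G$, the tree has depth at most $\Card{G}$ regardless of $k$. The only cosmetic difference is that the paper drops the depth-$k$ pruning and instead checks $\Card{P} \leq k$ once a branch reaches $G = \emptyset$; your companion remark (that $k \geq \Card{G}$ is trivially a \Yes{} instance via $P = G$) is a pleasant extra not present in the paper.
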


\begin{proof}
The search tree approach used in the proof of Theorem~\ref{thm:PI_FPT_k_n} can also be adapted to the combined parameter $\Card{G} + n$. Again we select an arbitrary $g$ from $G$. We branch on all possible patterns $p$ that are compatible with $g$, of which there are at most $2^{n}$, with the new set $G :=  G \setminus \{h \in G \mid p \to h\}$. If $p \to b$ for any $b \in B$, the branch is terminated. When we have $G = \emptyset$, we check whether the collected set $P$ of patterns in that branch. If $\Card{P} \leq k$ we answer \Yes{}, otherwise the branch is terminated. If all branches terminate with no \Yes{} answer, we answer \No{}.
\end{proof}

\begin{theorem}\label{thm:PIP_FPT_k_r_Sigma_B}
\sloppypar \PIPS{} is fixed-parameter tractable when parameterized by $k + \Card{\Sigma} + d + r + \Card{B}$.
\end{theorem}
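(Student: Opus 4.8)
The plan is to remove the one unbounded parameter, $\Card{G}$, by a counting argument and then fall back on the already-established tractable case of Theorem~\ref{PI_FPT_few_small_strings}. Let $(\Sigma,G,B,k,r)$ be the instance with identified base symbol $\sigma$. The key step is the bound: \emph{if the instance is a \Yes{}-instance then $\Card{G}\leq k\cdot\Card{\Sigma}^{r}$}. Indeed, let $P$ with $\Card{P}\leq k$ be a solution and take any $p\in P$, with set $S_{p}$ of $\ast$-positions, $\Card{S_{p}}\leq r$. Every $g\in G$ with $p\to g$ agrees with $p$ at all positions outside $S_{p}$, so the map $g\mapsto g|_{S_{p}}$ is injective on $\{g\in G: p\to g\}$; hence $p$ is compatible with at most $\Card{\Sigma}^{\Card{S_{p}}}\leq\Card{\Sigma}^{r}$ strings of $G$, and since $P\to G$, summing over the at most $k$ patterns gives $\Card{G}\leq k\cdot\Card{\Sigma}^{r}$. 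So the algorithm first tests in polynomial time whether $\Card{G}>k\cdot\Card{\Sigma}^{r}$; if so it answers \No{}, and otherwise $\Card{G}$ is bounded by a function of the parameter.

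Next, with $\Card{G}$, $\Card{B}$ and $d$ all bounded by the parameter, I would apply the column-deletion kernelization from the proof of Theorem~\ref{PI_FPT_few_small_strings}: delete every position at which all strings of $G\cup B$ carry $\sigma$. This yields an equivalent instance whose strings have length at most $d(\Card{G}+\Card{B})$ over an alphabet of size at most $d(\Card{G}+\Card{B})^{2}$. For the \PIPS{} variant one only needs to observe that this deletion is harmless with respect to the wildcard cap: it never increases the number of $\ast$ symbols of a pattern, and a solution of the reduced instance lifts back to one of the original, with the same number of $\ast$ symbols in each pattern, by refilling every deleted column with $\sigma$; equivalence of the \Yes{}/\No{} answers and preservation of $d$-smallness follow exactly as in Theorem~\ref{PI_FPT_few_small_strings}. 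After this step all of $n$, $\Card{\Sigma}$, $\Card{G}$, $\Card{B}$, $k$ and $r$ are bounded by functions of the original parameter, so the instance can be solved by brute force, enumerating all patterns of the bounded length over the bounded alphabet that use at most $r$ wildcards and testing each subset of size at most $k$ in polynomial time (the same idea as in Theorem~\ref{thm:PI_FPT_k_Sigma_n}, now trivially respecting the wildcard bound). Composing the three steps gives a running time of $f(k+\Card{\Sigma}+d+r+\Card{B})$ times a polynomial in the input size.

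The counting bound $\Card{G}\leq k\cdot\Card{\Sigma}^{r}$ is what I expect to be the crux: it is exactly what lets the unbounded parameter $\Card{G}$ be traded for the bounded parameters $k$, $\Card{\Sigma}$ and $r$, after which the result is essentially a bookkeeping composition of Theorems~\ref{PI_FPT_few_small_strings} and~\ref{thm:PI_FPT_k_Sigma_n}.
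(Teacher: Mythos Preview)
Your proof is correct and follows essentially the same approach as the paper: bound $\Card{G}\le k\cdot\Card{\Sigma}^{r}$ via the wildcard count, then invoke Theorem~\ref{PI_FPT_few_small_strings}. You are in fact more careful than the paper's own proof in verifying that the column-deletion kernel preserves the wildcard cap $r$ (so that the reduction is valid for \PIPS{} and not just \PIS{}), a detail the paper glosses over.
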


\begin{proof}
As each pattern can have at most $r$ many $\ast$ symbols, every other symbol in each pattern is fixed. Thus each pattern is compatible with $\Card{\Sigma}^{r}$ strings. This limits the number of strings in $G$ to $k\cdot\Card{\Sigma}^{r}$.

The tractability then follows from Theorem~\ref{PI_FPT_few_small_strings}.
\end{proof}

\section{Discussion}
\label{sec:discussion}

Complementing the classification results given above, we now discuss some related issues. Firstly (in Section~\ref{sec:discussion_parameters}), given the complex parameter landscape introduced, what problems remain unsolved, and which are the interesting parameterizations for future work? Secondly, we related \PI{} to some similar problems that give some small intuition as to sources of complexity in \PI{} (Section~\ref{sec:other_problems}).

\subsection{The Mire of Multivariate Analysis: Untangling the Parameters}
\label{sec:discussion_parameters}

The complexity analysis in this work involves a considerable number of parameters and unsurprisingly, there are some relationships between them that can be identified, allowing a better perspective on the sources of complexity in the problem, and what cases remain open. The immediately obvious relationships, for non-trivial parameter values\footnote{By non-trivial we mean values which differentiate the parameters -- for example, if $s > n$, $s$ becomes meaningless as any number of $\ast$ symbols would be allowed, within the limitation of length $n$ strings.}, are $r \leq n$, $s \leq n$ and $d \leq n$. We also note that $k \leq \Card{G}$ and $k \leq (\Card{\Sigma}+1)^{n}$, again for non-trivial values of $k$.

This helps to unravel some of the relationships present in the results of this work. We also note that, of course, expanding a list of parameters preserves tractability, while reducing a list of parameters preserves intractability

\begin{figure}[htb]
  \centering
  \includegraphics[scale=0.8]{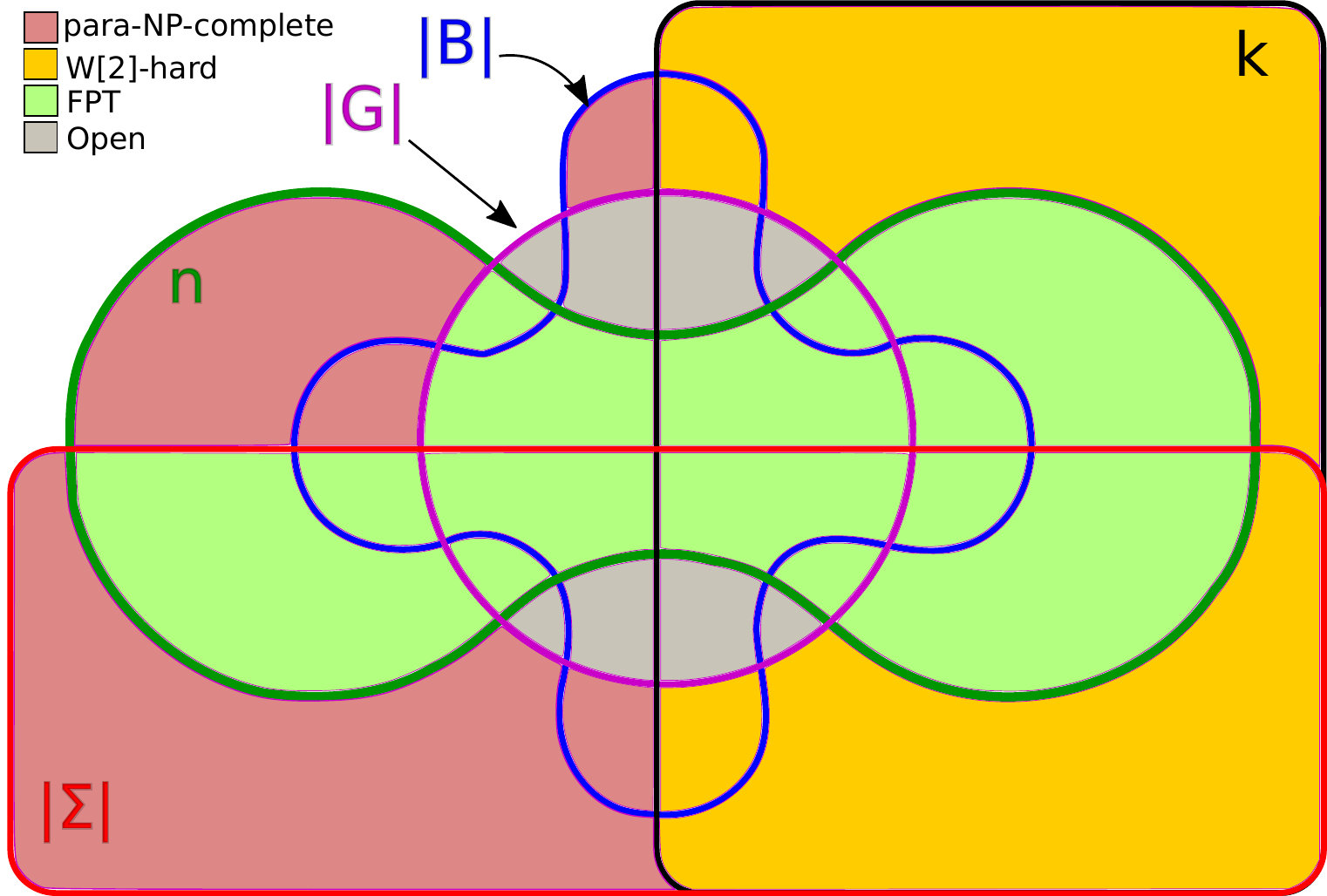}
  \caption{Simplified representation of the parameter space and the complexity results. We note in particular that $n$ or at least one of its related parameters $s$, $r$ or $d$ seems essential for tractability (though never sufficient). Given the nature of the input as a set of strings, it is perhaps unsurprising that at least two parameters are (apparently) needed for tractability. The obvious open cases are dominated by the parameter $\Card{G}$.}
  \label{fig:patterns-venn}
\end{figure}

A visual summary of the tractable, intractable and open cases for a simplified parameter space is given in Figure~\ref{fig:patterns-venn}. Given the relationships between $s$, $r$, $d$ and $n$, we reduce the parameter space to $k$, $\Card{\Sigma}$, $n$, $\Card{G}$ and $\Card{B}$. Although this reduces the accuracy of the space, the broad complexity landscape of the problem becomes more comprehensible.

Speculatively, we may observe that the problem seems to require at least two parameters for tractability. This is perhaps unsurprising, given the nature of the input -- we need some parametric ``handle'' on the length of the strings and another on the number of strings.

From Figure~\ref{fig:patterns-venn} it is clear that the central parameter in the open cases is $\Card{G}$, though we note that in the full parameter space, there are combinations of $s$, $r$ and $d$ with other parameters for which the complexity remains open\footnote{At last hand-count, 72 cases out of the 256 possible parameterizations with these parameters remain open, compared to 8 with the reduced parameter space.}.

\subsection{Ties to Other Problems}
\label{sec:other_problems}

The \PI{} problem, as would be expected, has ties to other problems that (can) model the general search for patterns that separate two sets of data. These ties also illustrate some features of the computational complexity of the problem.

\subsubsection{Set Covering}
\label{sec:set_covering}

When the length $n$ of the strings is small, \PI{} can be easily reduced to \SetCover{}. Given an instance $(\Sigma, G, B, k)$ of \PI{}, we can generate the set $P$ of all patterns that are compatible with some string in $G$. We know that $\Card{P} \leq \Card{G}\cdot 2^{n}$. From $P$ we remove any pattern that is compatible with a string in $B$. Let $P'$ be the set thus obtained. For each $p \in P'$, let $s_{p} = \{g \in G \mid p \to g\}$, and let $S = \{s_{p} \mid p \in P'\}$. Taking $G$ as the base set, $(G,S,k)$ forms an instance of \SetCover{} (parameterized by $k$). This reduction can be performed in $O((\Card{B}+\Card{G})\cdot\Card{G}\cdot 2^{n}n)$ time.

This leads to the following theorem:

\begin{theorem}\label{thm:PI_in_W[2]_sometimes}
  $\PI{} \in \Wtwo{}$ when $n \leq f(k)\cdot\log\Card{I}$ where $\Card{I}$ is the overall size of the instance and $f(k)$ is a computable function of the parameter $k$.
\end{theorem}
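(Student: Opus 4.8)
The plan is to combine the reduction to \SetCover{} described immediately before the theorem with the known membership of (parameterized) \SetCover{} in \Wtwo{}, being careful about the blow-up in instance size. First I would observe that the reduction takes an instance $(\Sigma, G, B, k)$ of \PI{} to an instance $(G, S, k)$ of \SetCover{} where $S$ has at most $\Card{G}\cdot 2^{n}$ sets over a base set of size $\Card{G}$, and runs in time $O((\Card{B}+\Card{G})\cdot\Card{G}\cdot 2^{n}n)$. The parameter $k$ is preserved exactly. The only issue is that $2^{n}$ could be super-polynomial (indeed not even fpt-bounded) in general, so this is not a parameterized reduction for arbitrary $n$; the hypothesis $n \leq f(k)\cdot\log\Card{I}$ is exactly what is needed to tame it.

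Next I would check that under the hypothesis the reduction is genuinely an fpt-reduction. If $n \leq f(k)\cdot\log\Card{I}$ then $2^{n} \leq \Card{I}^{f(k)}$, so both the size of the produced \SetCover{} instance and the running time of the reduction are bounded by $g(k)\cdot\Card{I}^{O(f(k))}$ for a computable $g$ — that is, fpt-time with the new instance size also fpt-bounded (in fact polynomial in $\Card{I}$ for each fixed $k$). Since $k$ is untouched, this is a parameterized many-one reduction $\PI{} \leq_{FPT} \SetCover{}$ on the class of instances satisfying the size restriction. As \SetCover{} parameterized by the size of the cover is in \Wtwo{} (indeed \Wtwo{}-complete, as recorded among the classification results in Section~\ref{sec:prelims}), membership of the restricted \PI{} in \Wtwo{} follows from closure of \Wtwo{} under fpt-reductions.

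The one genuinely delicate point — and the step I expect to need the most care — is the interaction between ``the size restriction only holds on some instances'' and the formal definition of a parameterized problem/class: one must phrase the statement as ``the promise/restricted problem lies in \Wtwo{}'', or equivalently pad instances violating $n \leq f(k)\cdot\log\Card{I}$ to trivial yes/no instances, so that the reduction is total. I would also double-check that the correctness of the \SetCover{} instance is exactly the content already argued before the theorem: a family of $\leq k$ patterns compatible with $G$ and incompatible with every $b \in B$ corresponds precisely to a subfamily of $S$ of size $\leq k$ covering $G$, because any pattern in a solution may be assumed compatible with some string of $G$ (else drop it) and hence appears in $P'$. With that the proof is essentially a one-line appeal to the preceding construction plus the arithmetic $2^{n} \leq \Card{I}^{f(k)}$.
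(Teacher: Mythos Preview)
Your approach is exactly the paper's: invoke the \SetCover{} reduction described immediately before the theorem, argue that under the size hypothesis it becomes a parameterized reduction, and conclude via $\SetCover{}\in\Wtwo{}$. The paper's entire proof is the single sentence ``The reduction above is a parameterized reduction if $2^{n} \in O(g(k)\cdot\Card{I}^{c})$ for some computable function $g$.'' Your additional remarks about totality/promise instances and about why the \SetCover{} correspondence is correct are sound and in fact more careful than the paper's treatment.

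One technical slip to flag: you bound the running time and output size by $g(k)\cdot\Card{I}^{O(f(k))}$ and call this ``fpt-time (in fact polynomial in $\Card{I}$ for each fixed $k$)''. A bound whose \emph{exponent} on $\Card{I}$ depends on the parameter is \XP{}-time, not fpt-time; an fpt-reduction needs $g(k)\cdot\Card{I}^{c}$ with $c$ a constant independent of $k$. Note that the paper's proof states the condition in exactly this constant-exponent form. In fairness the theorem's hypothesis $n\leq f(k)\cdot\log\Card{I}$ only yields $2^{n}\leq\Card{I}^{f(k)}$, so the mismatch is arguably already present in the statement itself; but your phrasing makes the error explicit, whereas the paper's one-liner sidesteps it by writing down the (stronger) condition that is actually needed.
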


\begin{proof}
  The reduction above is a parameterized reduction if $2^{n} \in O(g(k)\cdot\Card{I}^{c})$ for some computable function $g$.
\end{proof}

It is not clear that we retain \Wtwo{}-hardness in this case however, so we unfortunately do not obtain a \Wtwo{}-completeness result.

This does give us an immediate approximation algorithm for this case however. As \SetCover{} has a $1 + \log(\Card{S})$-factor linear time approximation algorithm~\cite{Johnson1974}, we obtain a $1 + \log(\Card{G}^{2}\cdot\log(\Card{I})\cdot 2^{f(k)})$-factor fpt-time approximation algorithm. 

\subsubsection{Feature Set}
\label{sec:feature_set}

The \kFS{} problem bears a strong resemblance to the \PI{} problem\footnote{Indeed, variants of \kFS{} have also been considered for use in similar applications as \PI{}~\cite{CottaMoscato2005}.}, except in the \kFS{} case, the problem asks for a set of features that separate the ``good'' examples from the ``bad'' rather than a set of patterns. In fact, given a feasible solution for one problem, we can construct a feasible (but not necessarily optimal) solution to the other.

Given a set $I = \{i_{1}, \ldots, i_{k}\}$ of indices of columns forming a feature set, we can construct a set of patterns that separates $G$ and $B$ as follows: for each $g \in G$, let $p_{g}$ be the pattern where $p_{g}[i] = g[i]$ if $i \in I$ and $p_{g}[i] = \ast$ otherwise. We note that this gives a set of small patterns (\emph{i.e.}, $s = k$), however the number of patterns may be as large as $\Card{G}$.

Conversely, given a set of patterns $P$ with at most $s$ non-$\ast$ symbols in each pattern, the set $I = \{i \in [n] \mid \exists p \in P(p[i] \neq \ast)\}$ forms a feature set. Again we note that the size of the feature set may be as large as $\Card{G}\cdot s$.

If we consider a variant of $\PIp$ where we relax the constraint on the number of patterns in the solution, it is easy to see that this problem is in \Wtwo{}. This suggests that the solution size plays a significant role in raising the complexity of the problem from a parameterized perspective.


\section{Conclusion and Future Directions}
\label{sec:conclusion}

There are a number of open complexity questions prompted by this paper, three of which we think are particularly interesting.

The central question is of course the precise classification of \PI{}. Although \PIp{} is \Wtwo{}-complete, the general problem is only \Wtwo{}-hard, and the containment of \PIP{} simply gives a loose upper bound, although does suggest that the problem is harder than \PIp{}. The problem, intuitively, also shares some similarities with $p$-\textsc{Hypergraph-(Non)-Dominating-Set} which is \Wthree{}-complete~\cite{ChenFlumGrohe2007}. $p$-\textsc{Colored-Hypergraph-(Non)-Dominating-Set} however is $\mathsf{W}^{\ast}[\mathsf{3}]$-complete~\cite{ChenFlumGrohe2007} and appears ``harder'' than \PI{}, hence we conjecture:

\begin{conjecture}\label{conj:PI_in_W[2]}
\sloppypar $\PI{}$ is $\Wthree{}$-complete when parameterized by $k$.
\end{conjecture}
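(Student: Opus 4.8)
The plan has two halves --- establishing \Wthree{}-hardness of \PI{} parameterized by $k$ and containment in \Wthree{} --- and the containment is the genuine obstacle: it is currently open even whether \PI{} parameterized by $k$ lies in \WP{} or \XP{}, since the set of patterns that could usefully appear in a solution is not polynomially bounded.

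\textbf{Hardness.} I would extend the reduction of Lemma~\ref{lemma:DS<=PI}, reducing from a \Wthree{}-complete problem: given the discussion above, a suitable variant of $p$-\textsc{Hypergraph-(Non)-Dominating-Set}~\cite{ChenFlumGrohe2007}, or weighted satisfiability of $3$-normalized formulae (conjunctions of disjunctions of conjunctions of literals), \ie{} ``is there a weight-$k$ set $S$ such that for every clause $a$ some conjunct $T_{a,j}$ is contained in $S$?''. As in Lemma~\ref{lemma:DS<=PI} the $k$ chosen elements become $k$ patterns, the clauses become strings of $G$ that must be matched, and $B$ is a small set of forbidden strings. The new ingredient --- exactly what separates \PI{} from \PIp{} --- is that a pattern may carry \emph{several} non-$\ast$ symbols, so the pattern chosen for an element $v$ can additionally ``declare'' which of its clause-mates it believes to be selected, with forbidden strings used to force that a string $g_a$ is matched only when some conjunct's declarations are mutually consistent with an actually-selected set. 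The technical crux of this direction is getting this declaration/consistency gadget simultaneously sound, complete and parameter-preserving (the number of patterns tied to the weight of the target assignment up to an additive function of $k$).

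\textbf{Containment.} The target would be model checking for $\Sigma_{3,1}$ formulae, which is \Wthree{}-complete. The starting point is a normalization: any \Yes{}-instance has a solution consisting of \emph{canonical} patterns, where the canonical pattern of a set $G' \subseteq G$ carries the common symbol of $G'$ at every position where $G'$ is unanimous and $\ast$ elsewhere --- replacing each $p \in P$ by the canonical pattern of $\{g \in G : p \to g\}$ only makes it more specific, so it still matches all of those strings, and it still avoids $B$ since the position at which $p$ disagrees with any $b$ is one where the matched strings are unanimous, hence retained in the canonical pattern. Thus \PI{} is equivalent to: can $G$ be covered by $k$ groups $G_1,\ldots,G_k$ so that for every $l$ and every $b \in B$ there is a position where $G_l$ is unanimous and disagrees with $b$? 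I would then try to write a $\Sigma_{3,1}$ sentence over the first-order structure used in Corollary~\ref{cor:PIp_W[2]-complete} that existentially guesses a bounded description of the $k$ groups and, in a trailing ``$\forall b\,\exists\,\text{position}$'' block, checks the avoidance condition --- the extra $\forall\exists$ alternation of $\Sigma_{3,1}$ over $\Sigma_{2,1}$ being what absorbs the otherwise unbounded quantification over $B$ (indeed, adding $\Card{B}$ to the parameter lets one trim every pattern to at most $\Card{B}$ non-$\ast$ symbols, collapsing the problem into \PIp{} and hence into \Wtwo{}).

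\textbf{The obstacle.} I expect the leading existential block to be the sticking point, and it is why this is still a conjecture. A canonical pattern may have up to $n$ non-$\ast$ symbols --- equivalently the group assignment $G \to [k]$ touches all of $G$ --- so there is no evident way to describe the $k$ patterns with $f(k)$ first-order elements; and pushing ``$G_l$ is unanimous here'' into the body naively costs a further universal quantifier over $G$, so even the alternation depth of $\Sigma_{3,1}$ is not obviously sufficient. The structural bound one naturally reaches trims a pattern to $\Card{B}$ essential non-$\ast$ symbols, not to a function of $k$, so it is of no help for the parameter $k$ alone. Making the containment work would therefore require either a ``small witness'' lemma bounding the essential part of each pattern by a function of $k$ --- which would in particular put \PI{} in \XP{} and \WP{} --- or a genuinely different, compact first-order encoding of a pattern together with a definable closure operation; absent either, one cannot even exclude that the true complexity of \PI{} parameterized by $k$ lies above \Wthree{} (the containment $\PIP{} \in \Wstarfive{}$ being a cautionary precedent). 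Settling membership in \XP{} or \WP{} is a natural first step.
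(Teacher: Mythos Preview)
The statement you are addressing is labelled a \emph{conjecture} in the paper, and the paper offers no proof of it --- only the informal motivation that \PI{} ``shares some similarities'' with $p$-\textsc{Hypergraph-(Non)-Dominating-Set} while appearing easier than the $\mathsf{W}^{\ast}[3]$-complete colored variant. There is therefore nothing to compare your proposal against: you have correctly recognised that no proof exists, and your write-up is an honest research plan rather than a proof attempt.

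Your diagnosis of the obstacle matches the paper's own. The paper states explicitly (Section~\ref{sec:containment}) that ``it is not clear that the problem lies in \WP{} or even \XP{}'', and you identify the same barrier: without a bound on the essential size of a pattern in terms of $k$ alone, one cannot write a $\Sigma_{3,1}$ sentence with a leading block of length $f(k)$, nor even place the problem in \XP{}. Your canonical-pattern normalisation (replace each $p$ by the most specific pattern still matching $\{g \in G : p \to g\}$) is sound and is a useful reformulation the paper does not make explicit; likewise your observation that with $\Card{B}$ added to the parameter one may trim every pattern to at most $\Card{B}$ non-$\ast$ symbols --- reducing to \PIp{} with $s = \Card{B}$ and hence landing in \Wtwo{} --- is correct and in fact sharpens the paper's Corollary~\ref{cor:PIp_W[2]-complete}, which carries $s$ as an explicit parameter. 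Combined with Theorem~\ref{thm:PI_W[2]-c} this gives \Wtwo{}-completeness for the parameter $k + \Card{B}$, a cleaner statement than any the paper records.

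For the hardness half, your sketch is plausible in spirit but remains a sketch: the ``declaration/consistency gadget'' you allude to is exactly the missing piece, and until it is specified one cannot assess whether the reduction is parameter-preserving or even sound. The paper does not attempt this direction at all, so you are ahead of it here, but not yet at a proof.
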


There are also some interesting parameterizations for which the complexity remains open:

\begin{itemize}
\item \sloppypar \PIS{} parameterized by $k + \Card{\Sigma} + d$, and
\item \sloppypar \PIPS{} parameterized by $k + d + r$.
\end{itemize}

Turning to the parameter $\Card{G}$, results for the following combinations of parameters would also close some of the significant open cases:

\begin{itemize}
\item \sloppypar \PI{} parameterized by $k + \Card{\Sigma} + \Card{G}$,
\item \PI{} parameterized by $\Card{G} + \Card{\Sigma} + \Card{B}$, and
\item \PI{} parameterized by $k + \Card{B} + \Card{G}$.
\end{itemize}

As a matter of prognostication, we would guess that the first of these is in \FPT{}, and the latter two are hard for some level of the \W{}-hierarchy, but as yet have no strong evidence for these claims.

\section{Acknowledgements}
\label{sec:acknowledgements}

PM acknowledges funding of his research by the Australian Research Council (ARC, http://www.arc.gov.au/) grants Future Fellowship FT120100060 and Discovery Project DP140104183.

\section{References}
\label{sec:references}

\bibliographystyle{plain}
\bibliography{patterns}

\end{document}